\newcommand{\be}{\begin{equation}}
\newcommand{\ee}{\end{equation}}
\newcommand{\setn}{\{1, \ldots, n\}}
\newcommand{\setwv}{[w_1, v_1] \times \ldots \times [w_n, v_n]}
\DeclareMathOperator*{\argmax}{arg\,max}
\begin{document}

\mainmatter  

\title{Revenue-Maximizing Mechanism Design for Quasi-Proportional Auctions}

\titlerunning{Revenue-Maximizing Mechanism Design for Quasi-Proportional Auctions}

%
%
\author{Zheng Wen 
\and Eric Bax
\and James Li}
\authorrunning{Wen et al.}

\urldef{\mailsa}\path|zhengwen, ebax, jamesyili@yahoo-inc.com|
\institute{Yahoo Labs\\ \mailsa}

%
%

\toctitle{Lecture Notes in Computer Science}
\tocauthor{Authors' Instructions}
\maketitle

\begin{abstract}
In quasi-proportional auctions, each bidder receives a fraction of the allocation equal to the weight of their bid divided by the sum of weights of all bids, where each bid's weight is determined by a weight function. We study the relationship between the weight function, bidders' private values, number of bidders, and the seller's revenue in equilibrium. It has been shown that if one bidder has a much higher private value than the others, then a nearly flat weight function maximizes revenue. Essentially, threatening the bidder who has the highest valuation with having to share the allocation maximizes the revenue. We show that as bidder private values approach parity, steeper weight functions maximize revenue by making the quasi-proportional auction more like a winner-take-all auction. We also show that steeper weight functions maximize revenue as the number of bidders increases. For flatter weight functions, there is known to be a unique pure-strategy Nash equilibrium. We show that a pure-strategy Nash equilibrium also exists for steeper weight functions, and we give lower bounds for bids at an equilibrium. For a special case that includes the two-bidder auction, we show that the pure-strategy Nash equilibrium is unique, and we show how to compute the revenue at equilibrium. We also show that selecting a weight function based on private value ratios and number of bidders is necessary for a quasi-proportional auction to produce more revenue than a second-price auction.

\end{abstract}

\section{Introduction}
\label{sec_introduction}

Quasi-proportional auctions \cite{tullock80,kelly97} award each bidder a fraction of the total allocation equal to the weight of their bid divided by the sum of weights of all bids, where a bid's weight is determined by a weight function. Hence, the allocation for bidder $i$ is
\be
a_i(\mathbf{b}) =  \frac{f(b_i)}{\sum_j f(b_j)},
\ee
where $\mathbf{b}$ is the vector of bids and $f$ is the weight function. In this paper, we focus on winners-pay quasi-proportional auctions, in which bidders pay their bid times their allocation. (A well-known alternative is the all-pay auction, in which all bidders pay their full bid regardless of allocation \cite{tullock80}. The all-pay auction has been used as a model for disparate interests plying officials with gifts and favors in hopes of influencing policy.)

Why use quasi-proportional auctions? It is well known that the revenue-optimal auction (for a single item, non-repeated) is the second-price auction with optimal reserve prices \cite{myerson81,riley81}. The optimal reserve prices are based on knowledge of prior distributions from which bidders draw their private values. Without this knowledge, we are in a \textit{prior-free} setting \cite{goldberg02,hartline07}, in which it can be a challenge to set effective reserve prices \cite{muthukrishnan09}. If the auction is repeated and priors are stable over time, then the priors may be learnable \cite{li10,cole14,dughmi14,hummel14}. However, in some practical scenarios with unknown priors, either the auction is not repeated, or the priors change from auction to auction. 

In the prior-free setting without reserve prices, Mirrokni et al. \cite{mirrokni10} show that a quasi-proportional auction has better worst-case performance than a second-price auction. In their worst case, the bidder with the highest private value has a much higher private value than the other bidders. In this case, they show that quasi-proportional auctions with functions $f(x) = x^p$ and $p \leq 1$, called Tullock auctions \cite{tullock80}, can achieve $\Omega$($\sqrt{\alpha}$) revenue, where $\alpha$ is the ratio of the highest private value to the next-highest private value. These results are called prior-free revenue results. 
Nguyen and Vojnovic \cite{nguyen10} show that for the prior-free setting without reserve prices, there is an upper bound of o($\frac{v_1}{\log(\frac{v_1}{v_2})}$) on equilibrium revenue, where $v_1$ and $v_2$ are the two highest private values of bidders. They also give a mechanism that achieves revenue $\Omega$($\frac{v_1}{\log^{1+\epsilon}(\frac{v_1}{v_2} + 1)}$), which is similar to the best known prior-free result with reserve prices \cite{lu06}.

Other than a lack of priors, some reasons why a quasi-proportional allocation might be useful include:

\begin{enumerate}
\item \textbf{The item is always awarded.} For auctions with reserve prices, the item may be withheld from all bidders. This may create a problem for the seller. Quasi-proportional auctions avoid this problem. 
\item \textbf{There is a shared allocation.} The seller may desire a shared allocation if a zero allocation to runner-up bidders makes them unlikely to participate in future auctions, which can decrease competition and revenue in those auctions. A shared allocation awards a ``second prize for the second price." (A single allocation is also possible: the auctioneer can award the item at random, with each bidder's probability of winning the item equal to its fraction of the allocation \cite{muthukrishnan09}.)
\end{enumerate}  

For the winners-pay quasi-proportional auction, assume bidder $i$ has utility function
\begin{equation}
u_i(\mathbf{b}) = a_i(\mathbf{b}) (v_i - b_i),
\end{equation}
where $v_i$ is bidder $i$'s private value for a full allocation. 
Without loss of generality, in this paper we assume the weight function $f$ has the form $f(x) = x^p$, where the exponent $p > 0$ is to be specified. 
The main contributions of this paper are: 

\begin{enumerate}
\item For all $p>0$, we show that the quasi-proportional auction with weight function $f(x) = x^p$ has a pure-strategy Nash equilibrium.
\item We give lower bounds for bids at a pure-strategy Nash equilibrium. The bounds are based on $p$ and the level of competition, in terms of number of bidders and their private values. As competition increases, higher values of $p$ maximize the bounds.
\item For the case of a bidder with a higher private value and one or more bidders that share a lower private value, we show that the pure-strategy Nash equilibrium is unique.
\item For that case, we show how to compute equilibrium revenue, and we explore how it depends on $p$, the ratio of higher to lower private values, and the number of bidders. We show that steeper weight functions are needed to maximize revenue as competition increases. 
\end{enumerate}



\section{Bid Lower Bounds at an Equilibrium}
\label{sec:general}

In this section, we consider quasi-proportional auctions with weight functions $f(x) = x^p$ for $p>0$. For these auctions, we prove the existence of a pure-strategy Nash equilibrium, and we give lower bounds for bids at an equilibrium. Mirrokni et al. \cite{mirrokni10} show there is a unique pure-strategy Nash equilibrium for $0<p\leq1$; we show that an equilibrium also exists for $p>1$. When $p>1$, a bidder's response function is not necessarily concave over the whole domain of their bids that give nonnegative utility: $[0,v]$, where $v$ is the bidder's private value. Thus, we cannot apply results such as those by Rosen \cite{rosen65}, as Mirrokni et al. do, to show the existence and uniqueness of a pure-strategy Nash equilibrium. Instead, we use Brouwer's fixed-point theorem to prove existence of an equilibrium. 

First we show that each bidder's best response is unique for all $p>0$. Next we derive vectors of lower bounds, $\mathbf{w} \in \Re^{+n}$, such that if all bidders $i$ bid $b_i \geq w_i$, where $w_i$ is the $i$th component of $\mathbf{w}$, then the same holds for the vector of bidders' best responses to each others' bids. We combine that result with Brouwer's fixed-point theorem to prove the existence of a pure-strategy Nash equilibrium. Since the bids at equilibrium are at least as great as the lower bounds $\mathbf{w}$, the lower bounds on bids also imply lower bounds for the auctioneer's equilibrium revenue. 

In this section, we use $b$ to represent a single bidder's bid and $\mathbf{b}$ to represent the vector of bidder's bids; we drop the $(b)$ notation that indicates functions of $b$, for example writing $u$ instead of $u(b)$ for a bidder's response function, and we use apostrophes to denote derivatives with respect to $b$, such as $f''$ for the second derivative of the weight function. 

\subsection{Uniqueness of the Best Response}
To show that each bidder's best response is unique, we will show that each bidder's response curve (their utility curve given other bidders' bids) is concave at all points that have derivative zero. Since the response curves are continuous, this implies there are no local minima. If there were multiple local maxima, there would have to be a local minimum between each successive pair of local maxima. So there can only be a single global maximum. (There is a maximum since the response curve is zero and ascending at zero bid, zero and descending at bid equal to the bidder's private value, and the response curve and its first derivative are continuous. So we can apply Bolzano's theorem to the derivative.)  

We will use $b$ to represent a single bidder's bid. Holding the other bidders' bids fixed, the allocation is 
\be
a = \frac{f}{f + s},
\ee
where $s$ is the sum of weight functions of the other bidders' bids. The bidder's utility function is
\be
u = a (v-b),
\ee
where $v$ is the bidder's private value. We have the following theorem:

\begin{theorem} \label{thm_best_response}
If $f = b^p$ such that $p>0$, and $s > 0$, then $(u' = 0) \Rightarrow (u''<0)$.
\end{theorem}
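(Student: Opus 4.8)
The plan is a direct computation, arranged so that the condition $u'=0$ is substituted into $u''$ \emph{before} expanding anything. First I would record the derivatives of the allocation: since $a = f/(f+s)$ with $s$ constant, $a' = s f'/(f+s)^2$, which is strictly positive for $b>0$ because $f' = p b^{p-1} > 0$. Differentiating $u = a(v-b)$ gives $u' = a'(v-b) - a$ and $u'' = a''(v-b) - 2a'$. At a point where $u'=0$ we have $v-b = a/a'$ (legitimate since $a'>0$), and plugging this into the expression for $u''$ collapses it to $u'' = \bigl(a'' a - 2(a')^2\bigr)/a'$. As $a'>0$, it suffices to show $a'' a - 2(a')^2 < 0$ at every such point; in fact I expect this inequality to hold for \emph{all} $b>0$, which removes any need to locate the critical points.

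Next I would reduce $a'' a - 2(a')^2$ to a statement about $f$ alone. Writing $1/a = 1 + s/f$ and differentiating twice gives $(1/a)'' = s\,(1/f)''$; combined with the identity $(1/a)'' = \bigl(2(a')^2 - a'' a\bigr)/a^3$ this yields $a'' a - 2(a')^2 = -\,s\,a^3\,(1/f)''$. Since $s>0$ and $a>0$ for $b>0$, the sign of $u''$ at a critical point is exactly the sign of $-(1/f)''$: the best response is strictly concave there precisely when $1/f$ is strictly convex. (Equivalently, a brute-force expansion gives $a'' a - 2(a')^2 = \frac{s}{(f+s)^3}\bigl(f f'' - 2(f')^2\bigr)$, the same criterion up to a positive factor.)

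Finally I would specialize to $f = b^p$: then $1/f = b^{-p}$ and $(1/f)'' = p(p+1)\,b^{-p-2} > 0$ for every $p>0$ and $b>0$, so $1/f$ is strictly convex and hence $u'' < 0$ at every critical point, as claimed. (In the alternative bookkeeping, $f f'' - 2(f')^2 = p(p-1)b^{2p-2} - 2p^2 b^{2p-2} = -p(p+1)b^{2p-2} < 0$.)

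There is no deep obstacle here; the care needed is mostly in (i) justifying the division by $a'$, i.e.\ noting $a'>0$ for $b>0$ — the only place $a'$ can fail to be positive is the left endpoint $b=0$ (when $p>1$), which is not an interior critical point and is handled by the boundary behaviour already described, so the statement is understood on $b>0$; and (ii) performing the substitution $v-b = a/a'$ into $u''$ before expanding, which is exactly what turns a messy rational expression into the clean quantity $(1/f)''$. I would flag the reformulation ``$u''<0$ at critical points $\iff 1/f$ strictly convex'' as the conceptual heart of the argument, since it also makes transparent why $p>0$ is precisely the right hypothesis.
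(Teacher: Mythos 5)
Your proposal is correct and follows essentially the same route as the paper: substitute $v-b=a/a'$ from $u'=0$ into $u''$ to reduce the claim to $aa''-2(a')^2<0$, push that down to $ff''-2(f')^2<0$, and verify it for $f=b^p$, where it becomes $-p(p+1)b^{2p-2}<0$. The only cosmetic difference is your reformulation of the criterion as strict convexity of $1/f$, which is a pleasant way to see why $p>0$ suffices but does not change the substance of the argument.
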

Please refer to the appendix for the proof.

Figure \ref{u_plot} shows the utility function and its first two derivatives for $p=1$, $p=2$, and $p=4$. In each case, we set $s$, the sum of weight functions for other bids, to 7, and the bidder's private value to 10. In Figure \ref{u_plot_p1}, with $p=1$, the utility function is concave (negative second derivative) over the whole bid domain [0,10]. In Figures \ref{u_plot_p2} and \ref{u_plot_p4}, the utility functions are not concave at $b=0$, but the second derivatives are negative at the $b$ values where the first derivatives are zero. In Figure \ref{u_plot_p1}, the utility function is a smoothly rounded curve. In Figure \ref{u_plot_p2}, the utility function is still somewhat rounded, but there is a slight S-curve starting at zero, and the curve is less rounded and more linear to the right of the maximum. In Figure \ref{u_plot_p4} these effects are more pronounced. The S-curve that makes the utility functions in Figures \ref{u_plot_p2} and \ref{u_plot_p4} non-concave on the left results from slow gains in allocation near bid zero, followed by quick gains due to the growth of $f = b^p$ for large $p$, then by leveling off in the allocation as $b^p$ comes to dominate the denominator in $a = \frac{b^p}{b^p + s}$. 

\begin{figure}
\subfigure[$p=1$]{\includegraphics[width = 0.31\textwidth]{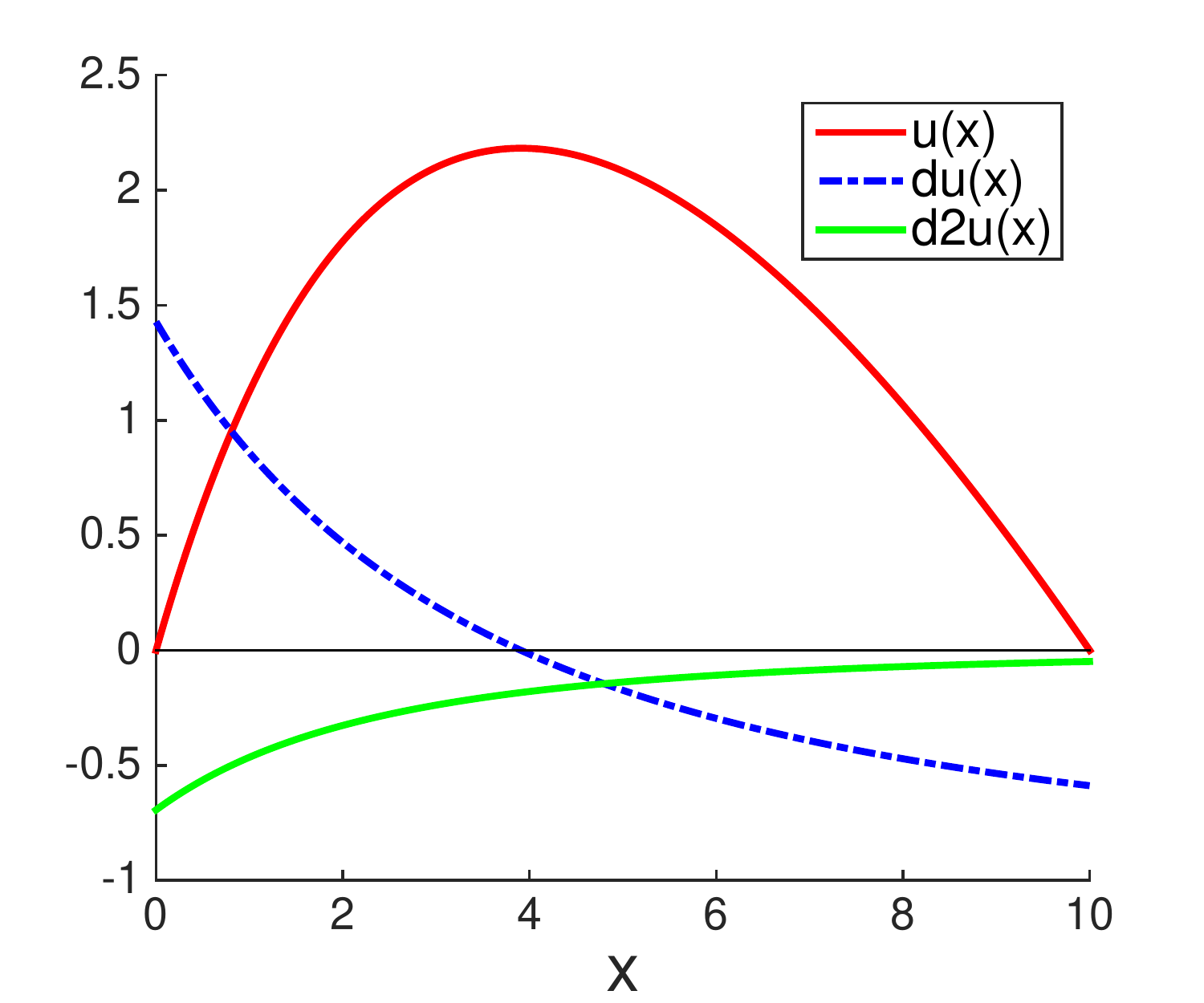} \label{u_plot_p1}}
\subfigure[$p=2$]{\includegraphics[width = 0.31\textwidth]{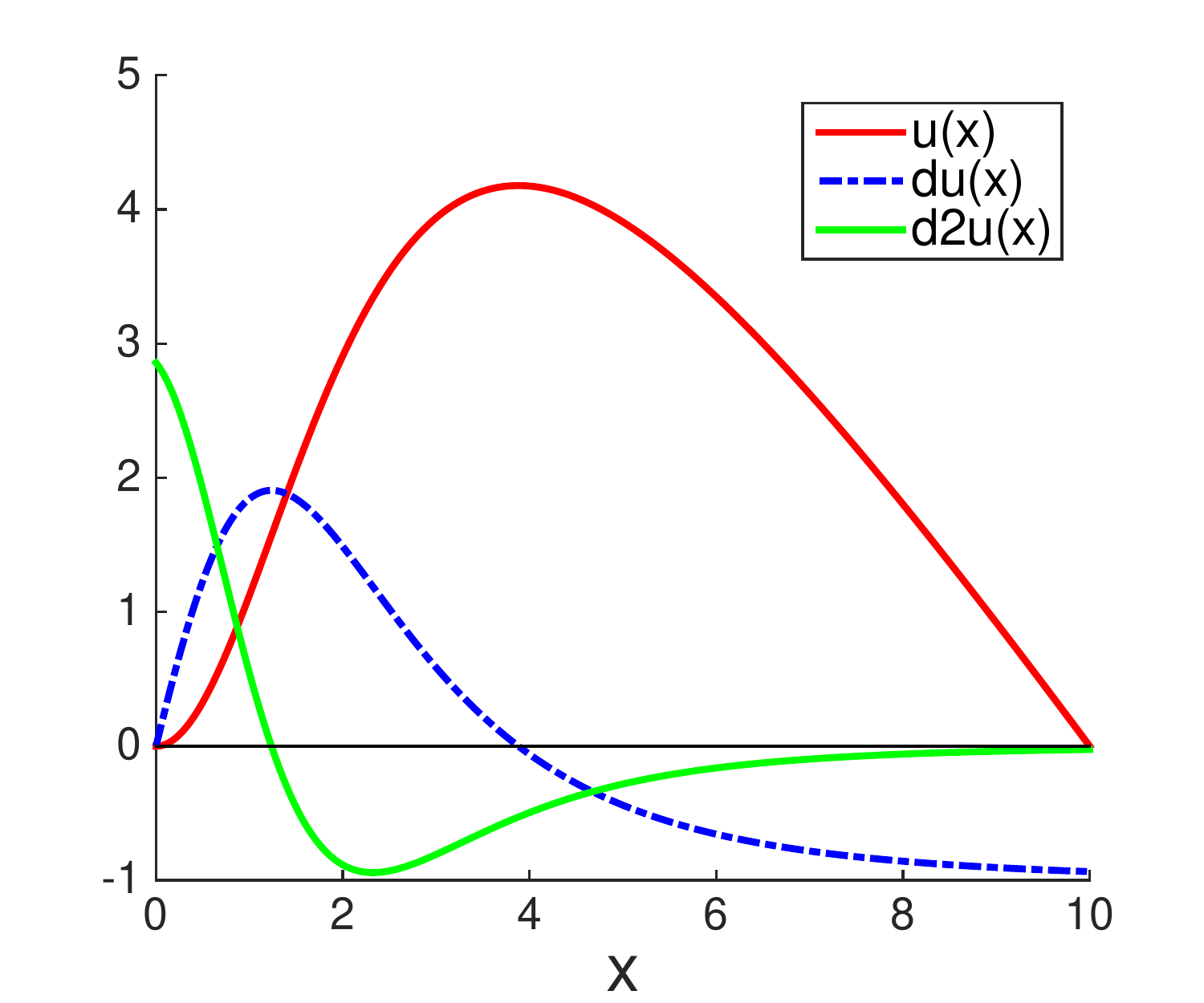} \label{u_plot_p2}}
\subfigure[$p=4$]{\includegraphics[width = 0.31\textwidth]{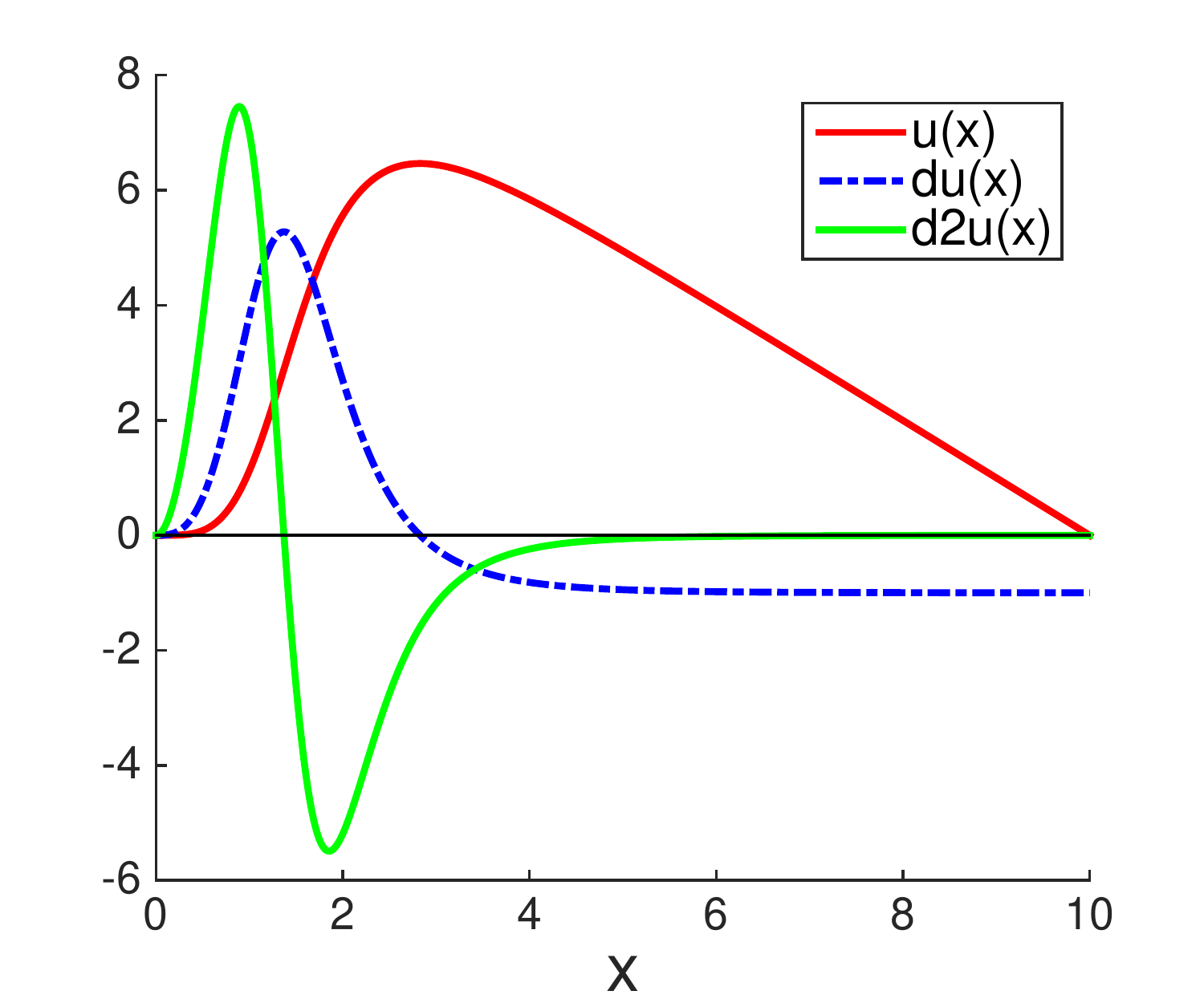} \label{u_plot_p4}}
\caption{Utility Functions and Derivatives for $p=1, 2, 4$}
\label{u_plot}
\end{figure}

%
%

\subsection{Pure-Strategy Nash Equilibria and Bid Lower Bounds}
Let $\mathbf{b} = (b_1, \ldots, b_n)$ be a vector of bids, with $b_i$ the bid for bidder $i$. Let $\mathrm{BR}_i(\mathbf{b})$ be bidder $i$'s best response to the other bidders' bids. Define response function $\mathrm{BR}(\mathbf{b}) = (\mathrm{BR}_1(\mathbf{b}), \ldots, \mathrm{BR}_n(\mathbf{b}))$. Let $v_i$ be the valuation for bidder $i$, and let $f$ be the weight function.
We have the following theorem:
\begin{theorem} \label{eq}
\label{thm_bounded_mapping}
Suppose $f = b^p$ for $p>0$, and a vector $\mathbf{w}$ meets the conditions
\be
\forall i \in \setn: w_i \leq \frac{v_i}{1 + \frac{1}{p}(1+\frac{f(w_i)}{s_i})},  \label{ineq_w}
\ee
where $s_i = \sum_{j \not= i} f(w_j)$. Then, for all $\mathbf{b}$ such that $\forall i: b_i \geq w_i$, $\forall i: \mathrm{BR}_i(\mathbf{b}) \geq w_i$.
\end{theorem}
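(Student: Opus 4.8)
The plan is to reduce the claim to a statement about the sign of a single bidder's utility derivative. Fix a bidder $i$ and any bid vector $\mathbf{b}$ with $b_j \ge w_j$ for all $j$, and let $s = \sum_{j\ne i} f(b_j)$ be the sum of the opponents' weights. Because $f(x)=x^p$ is increasing, $b_j \ge w_j$ gives $s \ge s_i > 0$, so Theorem~\ref{thm_best_response} applies to bidder $i$'s response curve $u$: $u$ is continuous and nonnegative, vanishes at $b=0$ and $b=v_i$, is ascending at $0$ and descending at $v_i$, and has no stationary point with nonnegative second derivative; hence it has a unique maximizer $\mathrm{BR}_i(\mathbf{b})\in(0,v_i)$, with $u'>0$ to its left and $u'<0$ to its right. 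Consequently $\mathrm{BR}_i(\mathbf{b})\ge w_i$ will follow as soon as we show $u'(w_i)\ge 0$ (note $w_i<v_i$, since the right-hand side of~(\ref{ineq_w}) is strictly smaller than $v_i$, so $w_i$ lies in the relevant range).

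Next I would compute $u'$ explicitly. From $a = f/(f+s)$ we get $a' = f's/(f+s)^2$, so
\[
u' \;=\; a'(v_i-b) - a \;=\; \frac{f}{f+s}\left(\frac{f's\,(v_i-b)}{f(f+s)} - 1\right).
\]
For $b>0$ the factor $f/(f+s)$ is positive, so $u'(b)\ge 0$ iff $f's(v_i-b)\ge f(f+s)$. Using $f=b^p$ and $f'=pb^{p-1}$, so that $f/f'=b/p$, this rearranges to
\[
v_i - b \;\ge\; \frac{b}{p}\left(1 + \frac{f(b)}{s}\right),
\qquad\text{equivalently}\qquad
b \;\le\; \frac{v_i}{\,1 + \frac1p\!\left(1 + \frac{f(b)}{s}\right)}.
\]

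Finally I would close the argument by monotonicity. Evaluated at $b=w_i$ and with $s$ replaced by $s_i$, the last inequality is exactly hypothesis~(\ref{ineq_w}). Since the map $t\mapsto \frac{v_i}{1+\frac1p(1+f(w_i)/t)}$ is increasing in $t$, and $s\ge s_i$, the inequality only gets easier when the true opponent weight sum $s$ is put back in; thus $w_i \le \frac{v_i}{1+\frac1p(1+f(w_i)/s)}$, i.e. $u'(w_i)\ge 0$. By the first paragraph this places $w_i$ at or below bidder $i$'s unique maximizer, so $\mathrm{BR}_i(\mathbf{b})\ge w_i$; as $i$ and $\mathbf{b}$ were arbitrary, the theorem follows.

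The work here is bookkeeping rather than conceptual: the one step that needs care is verifying that the condition ``$u'\ge 0$'' rearranges into precisely the algebraic form in~(\ref{ineq_w}) (in particular that $f/f'=b/p$ is what produces the factor $\frac1p$), and then tracking the two monotonicities — the comparison $s\ge s_i$ coming from $b_j\ge w_j$, and the fact that a larger opponent weight relaxes the bound. The only degenerate case, $s_i=0$ (every $w_j=0$), makes~(\ref{ineq_w}) vacuous and the conclusion $\mathrm{BR}_i(\mathbf{b})\ge 0$ trivial, and it does not arise when $\mathbf{w}$ has strictly positive entries.
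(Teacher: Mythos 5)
Your proof is correct and follows essentially the same route as the paper's: reduce to showing $u'(w_i)\ge 0$ via the single-local-maximum property from Theorem~\ref{thm_best_response}, then rearrange $a'(v_i-b)-a\ge 0$ into the form of Inequality~(\ref{ineq_w}). You are in fact slightly more careful than the paper, which performs the algebra with the actual opponent weight sum $s$ and leaves implicit the monotonicity step ($s\ge s_i$ relaxes the bound) that connects the hypothesis stated in terms of $s_i=\sum_{j\ne i}f(w_j)$ to the condition actually needed.
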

In other words, if $\mathbf{w}$ meets the conditions of the theorem, then $\mathrm{BR}$ maps $\setwv$ into itself. Please refer to the appendix for the proof.

Before we use this theorem to prove existence of an equilibrium and derive lower bounds for bids at an equilibrium, look at Inequality \ref{ineq_w}. The value of $p$ mediates a tradeoff. From the term $\frac{1}{p}$, the bound gets stronger (closer to $v$) as $p$ increases. However, for the bidder with the greatest private value, increasing $p$ can increase the weight function of that bidder's bid, $f$, so much that it dominates the sum of weight functions of other bidders' bids, $s$, and this effect becomes more pronounced as the ratio between the highest private value and other private values increases. So we can get stronger bounds by using higher $p$ if there is enough competition to keep $\frac{f}{s}$ low.

Using Theorem \ref{thm_bounded_mapping}, it is straightforward to prove that there exist pure-strategy Nash equilibria for quasi-proportional auction mechanisms with convex weight functions. 
\begin{theorem}
For any bounds $\mathbf{w}$ that meet the conditions of Theorem \ref{eq}, there exists a pure-strategy Nash equilibrium in $\setwv$.
\end{theorem}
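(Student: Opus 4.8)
The plan is to apply Brouwer's fixed-point theorem to the best-response map $\mathrm{BR}$ on the box $\setwv$. First I would record that this set is a legitimate domain: each interval $[w_i,v_i]$ is nonempty because Inequality \ref{ineq_w} forces $w_i \le v_i$ (the right-hand side of \ref{ineq_w} is at most $v_i$), and a finite product of nonempty closed bounded intervals is nonempty, compact, and convex.

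Second, I would verify that $\mathrm{BR}$ is a well-defined single-valued function on this domain. For any $\mathbf{b} \in \setwv$ and any bidder $i$, the sum of opponents' weights $s = \sum_{j \neq i} f(b_j)$ is strictly positive, since each $w_j > 0$ implies $f(b_j) \geq f(w_j) > 0$. Hence Theorem \ref{thm_best_response} applies and the response curve $u$ has a unique maximizer over the bids giving nonnegative utility, namely $[0,v_i]$; this is exactly $\mathrm{BR}_i(\mathbf{b})$. I would also note $\mathrm{BR}_i(\mathbf{b}) \le v_i$ automatically (bids above $v_i$ give negative utility while bid $0$ gives utility $0$), and Theorem \ref{thm_bounded_mapping} supplies the matching lower bound $\mathrm{BR}_i(\mathbf{b}) \ge w_i$. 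Therefore $\mathrm{BR}$ maps $\setwv$ into itself.

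Third, I would establish continuity of $\mathrm{BR}$. On the region where $s$ is bounded away from $0$, the utility $u_i(\mathbf{b}) = \frac{f(b_i)}{f(b_i)+s}(v_i - b_i)$ is jointly continuous in all bids, and the constraint set $[0,v_i]$ for bidder $i$'s own bid is fixed and compact; Berge's maximum theorem then gives that the argmax correspondence is upper hemicontinuous, and since Theorem \ref{thm_best_response} makes it single-valued it is in fact continuous. With $\mathrm{BR}$ a continuous self-map of the nonempty compact convex set $\setwv$, Brouwer's fixed-point theorem yields $\mathbf{b}^* \in \setwv$ with $\mathrm{BR}(\mathbf{b}^*) = \mathbf{b}^*$. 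Since each coordinate of $\mathbf{b}^*$ is a best response to the others, $\mathbf{b}^*$ is a pure-strategy Nash equilibrium lying in $\setwv$, as claimed.

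The step I expect to be the main obstacle is the continuity of $\mathrm{BR}$: one must ensure the weight-sum entering each bidder's problem stays bounded away from zero so that $u_i$ is genuinely continuous and the maximum theorem applies, and one must rule out any pathology of the maximizer at the boundary bid $b_i = 0$. Both points are handled by the facts that every $w_j > 0$ and that the best response is in fact interior, $\mathrm{BR}_i(\mathbf{b}) \in (0, v_i]$; the remainder of the argument is routine once uniqueness (Theorem \ref{thm_best_response}) and the self-mapping property (Theorem \ref{thm_bounded_mapping}) are in hand.
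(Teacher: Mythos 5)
Your proposal is correct and follows essentially the same route as the paper: establish that $\mathrm{BR}$ is a continuous self-map of the nonempty compact convex box $\setwv$ (self-mapping from Theorem \ref{thm_bounded_mapping}, single-valuedness from Theorem \ref{thm_best_response}) and apply Brouwer's fixed-point theorem. The only difference is that you supply the continuity argument (via Berge's maximum theorem and the fact that $s$ is bounded away from zero) explicitly, where the paper simply asserts it.
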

\begin{proof}
Note that the best response function $\mathrm{BR}$ is a continuous function of $\mathbf{b}$ over $\setwv$. Theorem \ref{eq} implies that $\mathrm{BR}$ maps $\setwv$ into itself. Brouwer's fixed point theorem \cite{brouwer12,franklin02} states that every continuous function from a convex compact set into itself has a fixed point. Therefore,
\be
\exists \mathbf{b}^* \in \setwv: \mathrm{BR}(\mathbf{b}^*) = \mathbf{b}^*.
\ee
\end{proof}

We now present some lower bound $\mathbf{w}$'s satisfying Inequality \ref{ineq_w}.
The following corollary gives lower bounds that are the same for all bidders. 

\begin{corollary}
\label{corollary_1}
Let $v_{\min} = \min(v_1, \ldots, v_n)$. Then Theorem \ref{eq} applies to lower bounds 
\be
\forall i: w_i = \frac{v_{\min}}{1 + \frac{1}{p}(1 + \frac{1}{n-1})} .
\ee
\end{corollary}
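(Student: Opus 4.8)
The plan is to verify directly that the proposed uniform vector $\mathbf{w}$ satisfies Inequality~\ref{ineq_w}, so that Theorem~\ref{eq} applies to it. Write $w$ for the common value $\frac{v_{\min}}{1 + \frac{1}{p}(1 + \frac{1}{n-1})}$, so that $w_i = w$ for every $i \in \setn$, and recall that it suffices to check $w_i \leq \frac{v_i}{1 + \frac{1}{p}(1 + \frac{f(w_i)}{s_i})}$ with $s_i = \sum_{j \neq i} f(w_j)$.

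First I would compute the self-to-others weight ratio $\frac{f(w_i)}{s_i}$ appearing on the right-hand side of Inequality~\ref{ineq_w}. Since all bidders use the same lower bound, $f(w_j) = w^p$ for every $j$, hence $s_i = (n-1)w^p$ and $\frac{f(w_i)}{s_i} = \frac{w^p}{(n-1)w^p} = \frac{1}{n-1}$. This is the one genuine simplification in the argument: with uniform bounds the ratio collapses to exactly $\frac{1}{n-1}$, independent of both $p$ and $w$.

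Substituting this value into the right-hand side of Inequality~\ref{ineq_w} turns it into $\frac{v_i}{1 + \frac{1}{p}(1 + \frac{1}{n-1})}$. It then remains to observe that the left-hand side $w_i = w = \frac{v_{\min}}{1 + \frac{1}{p}(1 + \frac{1}{n-1})}$ is at most this, which is immediate from $v_i \geq v_{\min}$ together with the fact that $1 + \frac{1}{p}(1 + \frac{1}{n-1})$ is a fixed positive constant. Thus Inequality~\ref{ineq_w} holds for all $i$, and the corollary follows.

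I do not expect a real obstacle here; the proof is essentially a substitution followed by monotonicity in $v_i$. The only points to state carefully are that we assume $n \geq 2$, so that $s_i > 0$ (consistent with the standing hypothesis $s > 0$ used throughout the section) and $\frac{1}{n-1}$ is well defined, and that, since $v_i \ge v_{\min}$ for every $i$, the single common bound $w$ works simultaneously for all bidders even when their private values differ.
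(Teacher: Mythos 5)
Your proof is correct and follows essentially the same route as the paper's: with uniform bounds the ratio $\frac{f(w_i)}{s_i}$ collapses to $\frac{1}{n-1}$, and the condition of Theorem~\ref{eq} then follows from $v_i \geq v_{\min}$. The extra remarks about $n \geq 2$ and $s_i > 0$ are sensible but not a substantive departure.
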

\begin{proof}
We will show that the condition of Theorem \ref{eq},
\be
\frac{v_i}{1 + \frac{1}{p}(1+\frac{f(w_i)}{s_i})} \geq w_i, \nonumber
\ee
holds. Since $w_1 = \ldots = w_n$ for these bounds, $\forall i: f(w_i) = \frac{s_i}{n-1}$. So 
\be
\forall i: \frac{v_i}{1 + \frac{1}{p}(1+\frac{f(w_i)}{s_i})} = \frac{v_i}{1 + \frac{1}{p}(1+\frac{1}{n-1})}  \nonumber
\ee
\be
\geq \frac{v_{\min}}{1 + \frac{1}{p}(1+\frac{1}{n-1})} = w_i. \nonumber
\ee
\end{proof}

Note that when $v_1=v_2=\ldots =v_n$, and $p=1$, then Corollary~\ref{corollary_1} gives a lower bound $\frac{v_1}{2 + \frac{1}{n-1}}$, which is $v_1/3$ when $n=2$ and approaches to
$v_1/2$ as $n \rightarrow \infty$. Moreover, for any $n$, this lower bound approaches $v_1$ as $p \rightarrow \infty$. 

Corollary~\ref{corollary_2} offers bounds based on the second-highest valuation. These bounds are stronger than those from Corollary \ref{corollary_1} when the second-highest valuation is significantly higher than the minimum valuation.



\begin{corollary}
\label{corollary_2}
Assume, without loss of generality, $v_1 \geq \ldots \geq v_n$. Let $w_1 = w_2 = \frac{v_2}{1 + \frac{2}{p}}$. For $i>2$, let 
\be
w_i = \min(\frac{v_i}{1 + \frac{1}{p}(1 + \frac{1}{i-1})}, w_{i-1}).
\ee
For these $\mathbf{w}$ values, the response function $\mathrm{BR}$ maps $\setwv$ into itself. 
\end{corollary}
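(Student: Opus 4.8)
The plan is to verify directly that the vector $\mathbf{w}$ given in the statement satisfies Inequality \ref{ineq_w} for every index $i$, after which Theorem \ref{eq} immediately gives the conclusion. The first step is to record a structural fact: $\mathbf{w}$ is non-increasing, $w_1 = w_2 \geq w_3 \geq \cdots \geq w_n$. This is immediate from the recursion, since for $i > 2$ the value $w_i$ is defined as a minimum that includes $w_{i-1}$, so $w_i \leq w_{i-1}$. Because $f(x) = x^p$ is increasing, monotonicity of $\mathbf{w}$ gives $f(w_j) \geq f(w_i)$ whenever $j \leq i$.

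The second step is to bound the ratio $f(w_i)/s_i$ from above, where $s_i = \sum_{j \neq i} f(w_j)$. For $i = 1$ and $i = 2$, the equality $w_1 = w_2$ gives $s_i \geq f(w_{3-i}) = f(w_i)$, so $f(w_i)/s_i \leq 1$. For $i > 2$, the $i - 1$ indices $j < i$ each satisfy $f(w_j) \geq f(w_i)$, so $s_i \geq \sum_{j<i} f(w_j) \geq (i-1)f(w_i)$, whence $f(w_i)/s_i \leq \frac{1}{i-1}$.

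The third step uses the fact that the right-hand side of Inequality \ref{ineq_w}, viewed as a function of the ratio $f(w_i)/s_i$, is decreasing. For $i = 2$, substituting the bound $1$ for the ratio shows the right-hand side is at least $\frac{v_2}{1 + 2/p} = w_2$. For $i = 1$, the same substitution gives at least $\frac{v_1}{1 + 2/p} \geq \frac{v_2}{1 + 2/p} = w_1$, using $v_1 \geq v_2$. For $i > 2$, substituting $\frac{1}{i-1}$ for the ratio gives at least $\frac{v_i}{1 + \frac{1}{p}(1 + \frac{1}{i-1})}$, which is $\geq w_i$ because $w_i$ was defined as the minimum of precisely that quantity and $w_{i-1}$. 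Hence Inequality \ref{ineq_w} holds for all $i$, and Theorem \ref{eq} shows $\mathrm{BR}$ maps $\setwv$ into itself.

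I do not anticipate a genuine difficulty; the argument is essentially bookkeeping. The only points needing care are treating the two base indices $i = 1, 2$ separately (this is where the ordering assumption $v_1 \geq v_2$ enters, and where the common value $\frac{v_2}{1 + 2/p}$ is calibrated so the ratio bound of $1$ is exactly tight) from the generic case $i > 2$, and making sure the monotonicity of $\mathbf{w}$ — which drives all the ratio bounds — is established before it is used.
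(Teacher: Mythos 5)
Your proposal is correct and follows essentially the same route as the paper's proof: bound the ratio $f(w_i)/s_i$ by $1$ for $i=1,2$ (using $w_1=w_2$ and $v_1 \geq v_2$) and by $\frac{1}{i-1}$ for $i>2$ (using that $\mathbf{w}$ is non-increasing), then substitute into the condition of Theorem \ref{eq}. The only difference is that you state the monotonicity of $\mathbf{w}$ explicitly, which the paper leaves implicit.
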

\begin{proof}
We will show that these bounds meet the condition from Theorem \ref{eq}. For each $i$, we need to show
\be
\frac{v_i}{1 + \frac{1}{p}(1+\frac{f(w_i)}{s_i})} \geq w_i. \nonumber
\ee
For $i=1$:
\be
\frac{v_1}{1 + \frac{1}{p}(1+\frac{f(w_1)}{s_1})} \geq \frac{v_1}{1 + \frac{1}{p}(1+\frac{f(w_1)}{f(w_2)})} =  \frac{v_1}{1 + \frac{2}{p}} \geq \frac{v_2}{1 + \frac{2}{p}} = w_1. \nonumber
\ee
For $i=2$:
\be
\frac{v_2}{1 + \frac{1}{p}(1+\frac{f(w_2)}{s_2})} \geq \frac{v_2}{1 + \frac{1}{p}(1+\frac{f(w_2)}{f(w_1)})} =  \frac{v_2}{1 + \frac{2}{p}} = w_2. \nonumber
\ee
For $i>2$:
\be
\frac{v_i}{1 + \frac{1}{p}(1+\frac{f(w_i)}{s_i})} \geq \frac{v_i}{1 + \frac{1}{p}(1+\frac{f(w_i)}{\sum_{j=1}^{i-1} f(w_j)})} \geq \frac{v_i}{1 + \frac{1}{p}(1 + \frac{1}{i-1})} \geq w_i. \nonumber
\ee
\end{proof}

In general, there are many $\mathbf{w} \in \Re^{+n}$ satisfying Inequality \ref{ineq_w}, and each of them corresponds to a different lower bound. These bounds provide some insight on how the auctioneer's revenue at the best pure-strategy equilibrium varies with $p$ and $n$.

\section{Symmetric Competition Against a High-Value Bidder}
\label{sec_special}

In this section, we consider a special case, where there is one bidder with a higher private value, and the other bidders have lower private values that are equal to each other. Without loss of generality, we assume that bidder $1$ has valuation $v_1=\alpha >1$, and bidders $2, \ldots, n$ have valuations $v_2=\ldots=v_n=1$. We call this case OLOS (one larger, others symmetric). The case of two bidders is a special case of OLOS. We use OLOS to gain insight about how $p$, the ratio $\alpha$ between the highest valuation and lower valuations, and the number of bidders $n$ affect revenue when a single bidder with a high private value competes against a set of bidders with lower private values. 

Having a single large bidder is particularly interesting since it yields a low equilibrium revenue in a second-price auction. Specifically, for OLOS,
the equilibrium revenue of the second-price auction is $1$, for any $\alpha>1$ and $n\geq2$. This is very unsatisfactory when $\alpha$ or $n$ is large. 
Intuitively, we expect that higher $\alpha$ should lead to higher equilibrium revenue, and more bidders (larger $n$) should spur more fierce competition, 
which should also increase the equilibrium revenue. In this section, we show by analysis and computational studies that we can adaptively design quasi-proportional auctions (i.e. adaptively choose exponent $p$ for each
$(n,\alpha)$) to achieve equilibrium revenues significantly larger than $1$, and strictly increasing in both
$\alpha$ and $n$.

As we will show in this section, the assumption of symmetric small bidders can significantly simplify the analysis.
Specifically, this assumption allows us to (1) prove the uniqueness of the pure-strategy Nash equilibrium and (2)
compute the auctioneer's equilibrium revenue efficiently. We believe that, in practice, this assumption is not as restrictive as it seems:
many practical cases in which there is one large bidder and many similar small bidders are well approximated by OLOS.

This section proceeds as follows: first, we show that for any choice of $\alpha>1$, $n\geq2$, and $p>0$,
there is a unique pure-strategy Nash equilibrium for OLOS. 
Second, we discuss how to efficiently compute the auctioneer's revenue at this unique pure-strategy Nash equilibrium,
 and provide upper and lower bounds on it.
 Finally, we discuss how to design the optimal quasi-proportional auction for each $(n, \alpha)$,
 and how $R^*(n,\alpha)$ and $p^*(n, \alpha)$ vary with $n$ and $\alpha$, where
 $R^*(n,\alpha)$ is the highest equilibrium revenue achieved by the quasi-proportional auctions
 for $(n,\alpha)$, and $p^*(n,\alpha)$ is the exponent achieving this highest equilibrium revenue.
 


\subsection{Uniqueness of the Pure-Strategy Nash Equilibrium}

We first prove that there is a unique pure-strategy Nash equilibrium for OLOS.
The following lemma shows that in such cases, a pure-strategy Nash equilibrium must have symmetric bids among 
all the small bidders.
\begin{lemma}
\label{lemma_1}
Assume that $\mathbf{b}^*$ is a pure-strategy Nash equilibrium, then we have
$$b_2^*=b_3^*=\ldots = b_n^*.$$
\end{lemma}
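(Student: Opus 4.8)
### Proof proposal for Lemma \ref{lemma_1}

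The plan is to argue by contradiction, exploiting the fact that bidders $2, \ldots, n$ all have the same private value $v = 1$ and that, by Theorem \ref{thm_best_response}, each bidder's best response to a fixed profile of opponents' bids is \emph{unique}. First I would observe that at a pure-strategy Nash equilibrium $\mathbf{b}^*$, each small bidder $i \in \{2, \ldots, n\}$ is playing their best response $\mathrm{BR}_i(\mathbf{b}^*)$, and that this best response depends on the other bidders' bids only through $s_i = \sum_{j \neq i} f(b_j^*)$, the sum of the opponents' weights. Writing $S = \sum_{j=1}^n f(b_j^*)$ for the total weight, we have $s_i = S - f(b_i^*)$, so the first-order condition $u_i' = 0$ for bidder $i$ becomes an equation relating $b_i^*$ and $S - f(b_i^*)$ with the common value $v = 1$. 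Thus every small bidder satisfies the \emph{same} equation in the single unknown $b_i^*$ once $S$ is regarded as fixed.

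Next I would make this precise. For a small bidder, the utility is $u = \frac{f(b)}{f(b) + s}(1 - b)$ with $f(b) = b^p$; setting $s = S - f(b)$ and differentiating, the stationarity condition $u' = 0$ can be rewritten as $g(b^*_i) = S$ for some fixed function $g$ that does not depend on $i$ (only on $p$ and the common value $1$). Concretely, from $u' = \frac{f' s}{(f+s)^2}(v - b) - \frac{f}{f+s} = 0$ one gets $f' s (v-b) = f(f+s)$, and substituting $s = S - f$ and $f + s = S$ yields $f'(b)(S - f(b))(1 - b) = f(b) \cdot S$, i.e. $S = \frac{f'(b)(1-b)f(b)}{f'(b)(1-b) - f(b)} =: g(b)$ (wherever the denominator is positive, which it must be at an interior maximum since $a > 0$ and $v - b > 0$ force $u' < 0$ when $s$ is large). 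Hence for every pair $i, k \in \{2, \ldots, n\}$ we have $g(b_i^*) = S = g(b_k^*)$.

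To conclude $b_i^* = b_k^*$ I then need injectivity of $g$ on the relevant range of bids, and this is where I expect the main obstacle to lie. Rather than analyzing $g$ directly, the cleaner route is to invoke Theorem \ref{thm_best_response}: the map $s \mapsto \mathrm{BR}(s;v)$ sending a small bidder's opponent-weight-sum to their unique best-response bid is well-defined, and one can check it is strictly monotone (as $s$ increases, the marginal value of bidding more to capture allocation changes monotonically, so the unique stationary point moves monotonically). If two small bidders bid differently, say $b_i^* \neq b_k^*$, then since both are best-responding, $\mathrm{BR}(s_i; 1) = b_i^* \neq b_k^* = \mathrm{BR}(s_k; 1)$ forces $s_i \neq s_k$; but $s_i - s_k = f(b_k^*) - f(b_i^*)$, and $f = b^p$ is strictly increasing, so $s_i \neq s_k$ gives $f(b_k^*) \neq f(b_i^*)$, hence $b_i^* \neq b_k^*$ with the \emph{opposite} ordering to what monotonicity of $\mathrm{BR}$ predicts — a contradiction. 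Assembling these pieces: strict monotonicity of $\mathrm{BR}(\cdot; 1)$ in the same direction as... requires care about the sign, so the real work is establishing that $\mathrm{BR}(\cdot;1)$ is \emph{decreasing} in $s$ (more competition from others lowers one's optimal bid for a quasi-proportional allocation), which combined with $b_i^* > b_k^* \Rightarrow s_i < s_k \Rightarrow b_i^* = \mathrm{BR}(s_i;1) > \mathrm{BR}(s_k;1) = b_k^*$ is consistent, so I instead pin down equality via: $b_i^* > b_k^*$ implies $f(b_i^*) > f(b_k^*)$ implies $s_i < s_k$; feeding this into the shared stationarity equation $g(b_i^*) = g(b_k^*)$ and showing $g$ is strictly monotone on the interval of candidate best responses then yields $b_i^* = b_k^*$. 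I would therefore spend the bulk of the proof verifying that $g(b) = \frac{f'(b)(1-b)f(b)}{f'(b)(1-b) - f(b)}$ is strictly monotone on $(0,1)$ over the region where the denominator is positive, or equivalently reprove this monotonicity by a short argument piggybacking on the $(u'=0) \Rightarrow (u'' < 0)$ fact from Theorem \ref{thm_best_response}.
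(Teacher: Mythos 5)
Your final plan is correct and is essentially the paper's argument: write the first-order condition for each small bidder, observe that the total weight $S=\sum_j f(b_j^*)$ is common to all of them, and reduce the claim to strict monotonicity of an explicit one-variable function. The difference is only in which function you choose. Eliminating $s$ via $s=S-f(b)$ you arrive at $g(b)=\frac{f'(b)(1-b)f(b)}{f'(b)(1-b)-f(b)}=\frac{p\,b^{p}(1-b)}{p-(1+p)b}$, which must equal $S$ for every small bidder; this works, but you must first justify the restriction to $b<\frac{p}{1+p}$ (it does follow from the first-order condition, since $S\bigl[f'(1-b)-f\bigr]=f'(1-b)f>0$) and then verify $\frac{d}{db}\ln g=\frac{p}{b}-\frac{1}{1-b}+\frac{1+p}{p-(1+p)b}>0$ there, the computation you defer; it goes through because $p/b>1+p$ and $1/(1-b)<1+p$ on that interval. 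The paper instead compares small bidders pairwise, writing $s_i^*=m+(b_j^*)^p$ with $m=\sum_{k\neq i,j}f(b_k^*)$, which produces the function $x\mapsto\frac{x(m+x^p)}{1-x}$ whose derivative $\frac{m+x^p+px^p(1-x)}{(1-x)^2}$ is manifestly positive on all of $(0,1)$ with no domain restriction --- that pairwise trick is the only simplification you are missing. One correction to your middle paragraph: the best response is not decreasing in the opponents' weight $s$; rewriting the stationarity condition as $\frac{p(v-b)}{b}=1+\frac{f(b)}{s}$ shows that increasing $s$ lowers the right-hand side and pushes the unique crossing point to a larger $b$, so the best response is \emph{increasing} in $s$. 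You were right to abandon that detour, though with the correct sign it too would yield the contradiction ($b_i^*>b_k^*$ gives $s_i<s_k$, hence $b_i^*=\mathrm{BR}(s_i)<\mathrm{BR}(s_k)=b_k^*$).
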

(Please refer to the appendix for the proof.)
Thus, to characterize $\mathbf{b}^*$, we only need to specify $b_1^*$ and $b_2^*$.
To simplify the exposition, in the remainder of this section, we sometimes use $b_1$ and $b_2$ to respectively denote $b_1^*$ and $b_2^*$, and define
$z=b_1/b_2$.
Note that from the first-order condition for the best response, we have
\begin{eqnarray}
(n-1) b_2^p \left[ \alpha p - (1+p) b_1 \right] &=& b_1^{p+1} \nonumber \\
\left[ b_1^p + (n-2) b_2^p\right] \left[ p- (1+p) b_2 \right] &=& b_2^{p+1}  \label{eqn:case_n:1}
\end{eqnarray}
Notice that the above equations imply that
$b_1 < \frac{p}{1+p} \alpha$ and $b_2 < \frac{p}{1+p}$.
Dividing the first equation by the second equation\footnote{It is straightforward to prove that at a Nash equilibrium, $b_1 >0$ and $b_2>0$, so the division is well-defined.}, we have
\begin{equation}
\frac{n-1}{z^p+n-2} \frac{\alpha p -(1+p) z b_2}{p - (1+p) b_2} =z^{p+1}. 
\end{equation}
Solving the above equation, we have
\begin{equation}
b_2=\frac{p}{1+p} \frac{\alpha-w}{z-w}, \quad \text{where} \quad w=z^{p+1} \frac{z^p+n-2}{n-1}. \label{eqn:case_n:2}
\end{equation}
Substituting the above equation into the first equation of Equation \ref{eqn:case_n:1}, we have
\[
(n-1) \left[ \alpha - \frac{w-\alpha}{w-z} z \right] = \frac{z^{p+1}}{1+p} \frac{w-\alpha}{w-z}, 
\]
which implies that
\[
w = \frac{\alpha z^{p+1}}{z^{p+1} + (1+p)(n-1)(z-\alpha)}.
\]
Combining the above equation with Equation \ref{eqn:case_n:2}, we have
\[
\frac{z^p+n-2}{n-1} = \frac{\alpha }{z^{p+1} + (1+p)(n-1)(z-\alpha)},
\]
which is
\begin{eqnarray}
h(z) &\stackrel{\Delta}{=}&
z^{2p+1} + \left[ (n-2) + (1+p)(n-1)\right] z^{p+1} - \alpha (1+p)(n-1) z^p  \nonumber \\
&+& (n-1)(n-2)(1+p)z - \alpha (n-1) \left[ (1+p)(n-2)+1\right] =0, \label{eqn:h_equation}
\end{eqnarray}
where $h(z)$ is a shorthand notation for the lefthand side of the above equation.
The following lemma states that Equation \ref{eqn:h_equation} has no root in
$\left[0,  \alpha^{\frac{1}{1+2p}}\right] \bigcup \left[ \alpha, \infty \right)$.
\begin{lemma}
\label{lemma:case_n_2}
$h(z)>0$ for all $z \geq \alpha$, and $h(z)<0$ for all $z \in \left[0, \alpha^{\frac{1}{1+2p}} \right]$.
\end{lemma}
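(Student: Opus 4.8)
The plan is to rewrite $h(z)$ in a factored form that makes the sign analysis transparent. Backing up one step in the derivation, Equation~\ref{eqn:h_equation} came from $\frac{z^p+n-2}{n-1} = \frac{\alpha}{z^{p+1}+(1+p)(n-1)(z-\alpha)}$; clearing denominators and moving everything to one side (and one checks the identity directly by expanding) gives
\[
h(z) = \bigl(z^p + n - 2\bigr)\bigl(z^{p+1} + (1+p)(n-1)(z-\alpha)\bigr) - \alpha(n-1).
\]
Write $A(z) := z^p + n - 2$ and $B(z) := z^{p+1} + (1+p)(n-1)(z-\alpha)$, so $h = AB - \alpha(n-1)$. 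Both $A$ and $B$ are nondecreasing on $[0,\infty)$, with $A(z) > 0$ for $z > 0$; and since $\alpha > 1$ and $p > 0$, the exponents $\tfrac{1}{1+2p}$ and $\tfrac{p+1}{1+2p}$ both lie in $(0,1)$, so in particular $\alpha^{\frac{1}{1+2p}} < \alpha$.

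For $z \ge \alpha$ the factor $z - \alpha$ is nonnegative, so $B(z) \ge z^{p+1} \ge \alpha^{p+1}$ and $A(z) \ge \alpha^p + n - 2 > 0$. Multiplying these lower bounds, $A(z)B(z) \ge \alpha^{2p+1} + (n-2)\alpha^{p+1} = \alpha\bigl(\alpha^{2p} + (n-2)\alpha^p\bigr)$, which is strictly larger than $\alpha\bigl(1 + (n-2)\bigr) = \alpha(n-1)$ because $\alpha > 1$ forces $\alpha^{2p} > 1$ and $\alpha^p > 1$. Hence $h(z) > 0$ on $[\alpha, \infty)$.

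For $z \in \bigl[0, \alpha^{\frac{1}{1+2p}}\bigr]$, set $z_0 := \alpha^{\frac{1}{1+2p}} < \alpha$. If $B(z) \le 0$, then $A(z)B(z) \le 0 < \alpha(n-1)$ and $h(z) < 0$ at once. If $B(z) > 0$, then $z \le z_0 < \alpha$ makes the correction term in $B$ negative, so $B(z) < z^{p+1} \le z_0^{p+1} = \alpha^{\frac{p+1}{1+2p}}$, while monotonicity of $A$ gives $A(z) \le A(z_0) = \alpha^{\frac{p}{1+2p}} + n - 2$. Multiplying these positive upper bounds, $A(z)B(z) < \alpha + (n-2)\alpha^{\frac{p+1}{1+2p}} < \alpha + (n-2)\alpha = \alpha(n-1)$, the last step using $\tfrac{p+1}{1+2p} < 1$ and $\alpha > 1$. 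So $h(z) < 0$ on $\bigl[0, \alpha^{\frac{1}{1+2p}}\bigr]$.

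The only genuinely nontrivial step is spotting the factorization of $h$; after that the argument is just separate elementary bounds on $A$ and $B$ over each interval together with $\alpha > 1$ and $p > 0$. I would also spell out the degenerate case $n = 2$, where $A(z) = z^p$ vanishes at $z = 0$ and the $(n-2)$-terms disappear — the bounds above still go through, and at $z = 0$ one simply has $h(0) = -\alpha < 0$.
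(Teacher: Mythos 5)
Your proof is correct, and it organizes the argument differently from the paper. The paper keeps $h$ in its fully expanded polynomial form and pairs off terms: for $z\ge\alpha$ it observes $(1+p)(n-1)z^{p+1}\ge\alpha(1+p)(n-1)z^p$ and $(n-1)(n-2)(1+p)z\ge(1+p)(n-2)(n-1)\alpha$, reducing the claim to $z^{2p+1}+(n-2)z^{p+1}>\alpha(n-1)$, and argues symmetrically for $z\le\alpha^{\frac{1}{1+2p}}$. You instead undo the expansion and work with the product form $h=AB-\alpha(n-1)$, where $A=z^p+n-2$ and $B=z^{p+1}+(1+p)(n-1)(z-\alpha)$ --- exactly the cross-multiplied form of the equation $h$ was derived from --- and then bound each factor by monotonicity. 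I verified the factorization identity; it is correct, and the two routes land on the same final numerical comparisons ($\alpha^{2p+1}+(n-2)\alpha^{p+1}$ versus $\alpha(n-1)$ on one side, $\alpha+(n-2)\alpha^{\frac{p+1}{1+2p}}$ versus $\alpha(n-1)$ on the other). Your version makes the sign logic somewhat more transparent (the subcase $B\le 0$ is dispatched instantly, and the role of the thresholds $\alpha$ and $\alpha^{\frac{1}{1+2p}}$ as the points where $B$ and $AB$ cross the relevant levels becomes visible), at the mild cost of having to state and check the factorization. One cosmetic point: in the chain $A(z)B(z)<\alpha+(n-2)\alpha^{\frac{p+1}{1+2p}}<\alpha+(n-2)\alpha$ the second inequality degenerates to an equality when $n=2$; since the first inequality is already strict the conclusion $h(z)<0$ is unaffected, and you do flag $n=2$ separately, but that middle step should be written as $\le$.
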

Please refer to the appendix for the proof of Lemma \ref{lemma:case_n_2}. The following lemma states that
Equation \ref{eqn:h_equation}  has a unique solution in $\Re^+$, 
whose proof is also available in the appendix.
\begin{lemma}
\label{lemma:case_n_3}
The equation $h(z)=0$ has a unique solution $\tilde{z} \in \Re^+$. Moreover, $\tilde{z} \in \left ( \alpha^{\frac{1}{2p+1}}, \alpha \right )$. 
\end{lemma}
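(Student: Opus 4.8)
The statement has two parts: the root of $h$ is unique on $\Re^+$, and it lies in $\left(\alpha^{1/(2p+1)},\alpha\right)$. The localization part is immediate once existence is known. Indeed $h$ is continuous on $[0,\infty)$, and Lemma~\ref{lemma:case_n_2} gives $h(z)<0$ on $\left[0,\alpha^{1/(2p+1)}\right]$ and $h(z)>0$ on $[\alpha,\infty)$; hence any positive root must lie in the open interval $\left(\alpha^{1/(2p+1)},\alpha\right)$, and since $h\!\left(\alpha^{1/(2p+1)}\right)<0<h(\alpha)$ the intermediate value theorem produces at least one such root. So the real work is uniqueness.

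For uniqueness I would first record the product form of $h$ that falls out of the derivation of Equation~\ref{eqn:h_equation}: clearing denominators in $\frac{z^p+n-2}{n-1}=\frac{\alpha}{z^{p+1}+(1+p)(n-1)(z-\alpha)}$ shows that, writing $D(z):=z^{p+1}+(1+p)(n-1)(z-\alpha)$,
\[
h(z)=\bigl(z^p+n-2\bigr)D(z)-\alpha(n-1).
\]
Since $D'(z)=(1+p)\bigl(z^p+n-1\bigr)>0$ for $z\ge 0$, the function $D$ is strictly increasing; from $D(0)=-(1+p)(n-1)\alpha<0$ and $D(z)\to\infty$ there is a unique $z_0>0$ with $D(z_0)=0$, and $D<0$ on $[0,z_0)$, $D>0$ on $(z_0,\infty)$.

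Now I would split according to the sign of $D$. On $[0,z_0]$ we have $z^p+n-2\ge 0$ and $D(z)\le 0$, so $h(z)\le-\alpha(n-1)<0$ and there is no root. On $(z_0,\infty)$ both factors $z^p+n-2$ and $D(z)$ are strictly positive and strictly increasing, hence so is their product $\psi(z):=\bigl(z^p+n-2\bigr)D(z)$; since $\psi(z_0)=0<\alpha(n-1)$ and $\psi(z)\to\infty$, the equation $h(z)=\psi(z)-\alpha(n-1)=0$ has exactly one solution in $(z_0,\infty)$. Combining the two regions, $h$ has a unique positive root $\tilde z$, which by the first paragraph lies in $\left(\alpha^{1/(2p+1)},\alpha\right)$. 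The only point that needs a word of care — more a detail than an obstacle — is the boundary case $n=2$, where $z^p+n-2=z^p$ vanishes at $z=0$; but $z_0>0$ strictly, so the monotonicity argument on $(z_0,\infty)$ runs entirely inside $\{z>0\}$, where $z^p+n-2>0$, and nothing changes. Everything else is a direct monotonicity count, so I do not expect a genuine difficulty.
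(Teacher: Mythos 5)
Your proof is correct, and it takes a genuinely different route from the paper's. You exploit the product form $h(z)=\bigl(z^p+n-2\bigr)D(z)-\alpha(n-1)$ with $D(z)=z^{p+1}+(1+p)(n-1)(z-\alpha)$, which is indeed an algebraic identity (direct expansion reproduces the coefficients $c_1,\dots,c_4$ of Equation~\ref{eqn:h_equation}); since $D$ is strictly increasing with a single sign change at some $z_0>0$, the function $h$ is bounded above by $-\alpha(n-1)$ on $[0,z_0]$ and equals a strictly increasing function minus a positive constant on $(z_0,\infty)$, giving global uniqueness on $\Re^+$ by a pure monotonicity count. The paper instead works entirely inside the interval $\left(\alpha^{1/(2p+1)},\alpha\right)$ and shows that $h'(z)>0$ wherever $h(z)\geq 0$, via the chain $h(z)\geq 0 \Rightarrow z^{2p+1}+c_1z^{p+1}-c_2z^p>\alpha(n-1) \Rightarrow h'(z)>p(n-1)+c_3>0$; it needs Lemma~\ref{lemma:case_n_2} both to seed existence and to confine attention to that interval. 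Your argument uses Lemma~\ref{lemma:case_n_2} only for the localization $\tilde z\in\left(\alpha^{1/(2p+1)},\alpha\right)$, avoids differentiating $h$ altogether, and sidesteps the slightly delicate inequality manipulations; the paper's ``increasing where nonnegative'' device is more generic but costs those estimates. Your handling of the $n=2$ boundary case is also correct: $z_0>0$ because $D(0)=-(1+p)(n-1)\alpha<0$, so the factor $z^p+n-2=z^p$ is strictly positive throughout $(z_0,\infty)$.
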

Putting together all the results in this section, we have the following theorem:
\begin{theorem}
\label{thm_uniqueness}
For OLOS, there is a unique pure-strategy Nash equilibrium $\mathbf{b}^*$. 
Moreover:
\begin{enumerate}
\item This equilibrium is symmetric among all the small bidders
in the sense that $b_2^* = b_3^* =\ldots = b_n^*$.
\item At this equilibrium, $\alpha^{\frac{1}{1+2p}} < b_1^*/ b_2^* < \alpha$ and
$
\frac{p}{1+p + \frac{1}{n-1}}\leq b_2^* < \frac{p}{1+p}$.
\end{enumerate}
\end{theorem}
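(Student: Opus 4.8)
\section*{Proof proposal (plan for Theorem~\ref{thm_uniqueness})}

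The plan is to assemble results already in hand. Existence of a pure-strategy Nash equilibrium for OLOS follows at once from the existence theorem above (proved via Brouwer's fixed-point theorem), applied to the box $\setwv$ furnished by Corollary~\ref{corollary_1} with $v_{\min}=1$; so the real content is uniqueness together with the stated bounds. The strategy is to show that every equilibrium is pinned down by the scalar ratio $z=b_1^*/b_2^*$, that $z$ must be a positive root of $h$, and then to invoke Lemma~\ref{lemma:case_n_3}, which says $h$ has exactly one positive root.

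First I would reduce an arbitrary equilibrium $\mathbf{b}^*$ to the equation $h(z)=0$. By Lemma~\ref{lemma_1}, $b_2^*=b_3^*=\cdots=b_n^*$, so $\mathbf{b}^*$ is determined by the pair $(b_1^*,b_2^*)$, and all bids are positive. Hence each bidder faces a strictly positive sum $s_i$ of opponents' weights, so by Theorem~\ref{thm_best_response}, together with the observation made just before that theorem that each response curve is zero and ascending at bid $0$ and zero and descending at the bidder's private value, each bidder's best response is the unique interior bid with $u'=0$. This is exactly the first-order system of Equation~\ref{eqn:case_n:1}, and the algebra carried out in the text then forces $z=b_1^*/b_2^*$ to satisfy $h(z)=0$, with $b_2^*$ recovered from $z$ via Equation~\ref{eqn:case_n:2}. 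I would verify that the divisions used in that reduction are legitimate at an equilibrium: $p-(1+p)b_2^*\neq 0$ because the second equation of Equation~\ref{eqn:case_n:1} has positive right-hand side and positive factor $b_1^{*p}+(n-2)b_2^{*p}$, and $z\neq w$ because a short computation shows $z=w$ would force $z=1$, whereas Lemma~\ref{lemma:case_n_3} places $z$ in $(\alpha^{1/(2p+1)},\alpha)$ with $\alpha>1$. Alternatively, once $z$ is fixed the second equation of Equation~\ref{eqn:case_n:1} is linear in $b_2^*$ with the unique positive solution $b_2^*=p(z^p+n-2)/\bigl(1+(1+p)(z^p+n-2)\bigr)$, which avoids these checks.

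Next I would apply Lemma~\ref{lemma:case_n_3}: $h$ has a unique positive root $\tilde z\in(\alpha^{1/(2p+1)},\alpha)$. Hence at any equilibrium $b_1^*/b_2^*=\tilde z$, which by the previous paragraph determines $b_2^*$ uniquely and then $b_1^*=\tilde z\,b_2^*$ uniquely; so there is at most one equilibrium, and with existence there is exactly one. Part~(1) of the theorem is Lemma~\ref{lemma_1} applied to this equilibrium, and the ratio bounds $\alpha^{1/(1+2p)}<b_1^*/b_2^*<\alpha$ are exactly the location of $\tilde z$ from Lemma~\ref{lemma:case_n_3} (note $1/(1+2p)=1/(2p+1)$).

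Finally, for the remaining bounds: the upper bound $b_2^*<p/(1+p)$ is the one already read off from the second equation of Equation~\ref{eqn:case_n:1}; and since the equilibrium is unique it coincides with the one whose existence in $\setwv$ is guaranteed by Corollary~\ref{corollary_1} with $v_{\min}=1$, so $b_2^*\geq 1/\bigl(1+\tfrac{1}{p}(1+\tfrac{1}{n-1})\bigr)=p/\bigl(1+p+\tfrac{1}{n-1}\bigr)$. I expect the only friction in the whole argument to be the bookkeeping around positivity and interiority of equilibrium bids and the validity of the reduction to $h(z)=0$; all the substantive analysis is already packaged in Lemmas~\ref{lemma:case_n_2} and~\ref{lemma:case_n_3}.
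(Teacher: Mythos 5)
Your proposal is correct and follows essentially the same route as the paper's proof: symmetry via Lemma~\ref{lemma_1}, reduction to the single equation $h(z)=0$, uniqueness of its root via Lemma~\ref{lemma:case_n_3}, recovery of $b_2^*$ from Equation~\ref{eqn:case_n:2}, and the bid bounds from Equation~\ref{eqn:case_n:1} and Corollary~\ref{corollary_1}. The extra checks you flag (positivity of the denominators, $z\neq w$, and the explicit appeal to the existence theorem) are details the paper leaves implicit, and your verifications of them are sound.
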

\begin{proof}
The symmetry among the small bidders follows from Lemma~\ref{lemma_1}.
From Lemma~\ref{lemma:case_n_3}, equation $h(z)=0$ has a unique solution in $\left ( \alpha^{\frac{1}{2p+1}}, \alpha \right )$. 
Hence, $z=b_1^*/b_2^*$ is uniquely determined and $\alpha^{\frac{1}{1+2p}} < b_1^*/ b_2^* < \alpha$. From Equation \ref{eqn:case_n:2}, 
once $z$ is uniquely determined, $b_2^*$ and $b_1^*= z b_2^*$ are also uniquely determined.
Thus, there exists a unique pure-strategy Nash equilibrium $b^*$.
Finally, $b_2^* < \frac{p}{1+p}$ follows from Equation \ref{eqn:case_n:1}, and 
$b_2^* \geq \frac{p}{1+p + \frac{1}{n-1}} $ follows from Corollary \ref{corollary_1}.
\end{proof}

\subsection{Auctioneer's Equilibrium Revenue}

We now characterize the auctioneer's equilibrium revenue. Notice that
\begin{eqnarray}
R &=& \frac{b_1^p}{b_1^p + (n-1) b_2^p} b_1 + \frac{(n-1)b_2^p}{b_1^p + (n-1) b_2^p} b_2 \nonumber \\
&=&  \frac{z^p}{z^p + (n-1) } z b_2 + \frac{(n-1)}{z^p + (n-1)} b_2 \nonumber \\
&=& \frac{z^{p+1} +n-1}{z^p +n-1} b_2. \nonumber
\end{eqnarray}
From Equation \ref{eqn:case_n:2}, we have
\begin{equation}
R=\frac{p}{1+p} \left[ 1+ \frac{z^p (z-1)}{z^p+ (n-1)}\right] \left[ 1- \frac{(\alpha-z)(n-1)}{z^{2p+1} + (n-2) z^{p+1} -z(n-1)}\right], \label{eqn_R}
\end{equation}
where $z$ is the unique solution of Equation \ref{eqn:h_equation}. 
Thus, we can efficiently compute the auctioneer's equilibrium revenue as follows:
\begin{enumerate}
\item Compute $z$ by (numerically) solving
Equation \ref{eqn:h_equation}.
\item Compute the auctioneer's equilibrium revenue $R$ by Equation \ref{eqn_R}.
\end{enumerate}
Since OLOS and the quasi-proportional mechanism
are fully characterized by the triple $(n, \alpha, p)$, sometimes we write 
$R$ as $R(n, \alpha, p)$ to emphasize this dependence. 
It is straightforward to derive the following bounds
on $R(n, \alpha, p)$:
\begin{corollary}
\label{corollary_3}
For any $(n, \alpha, p)$, we have $R(n, \alpha, p) < \frac{p}{1+p} \left[ 1+ \frac{\alpha^p (\alpha-1)}{\alpha^p + (n-1)}\right]$ and
\small
\[
R(n, \alpha, p) > \frac{p}{1+p} \left[ 1+ \frac{\alpha^{\frac{p}{2p+1}} \left( \alpha^{\frac{1}{2p+1}}-1 \right)}{\alpha^{\frac{p}{2p+1}} + (n-1)}\right] \left[ 1- \frac{(\alpha-\alpha^{\frac{1}{2p+1}})(n-1)}{\alpha + (n-2) \alpha^{\frac{p+1}{2p+1}} -(n-1) \alpha^{\frac{1}{2p+1}}}\right].
\]
\normalsize
\end{corollary}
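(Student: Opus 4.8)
The plan is to read the two bounds directly off Equation \ref{eqn_R}, using the fact that the equilibrium ratio $z = b_1^*/b_2^*$ satisfies $\alpha^{\frac{1}{2p+1}} < z < \alpha$ (Theorem \ref{thm_uniqueness}). Write $R = \frac{p}{1+p}\,A(z)\,B(z)$ with $A(z) = 1 + \frac{z^p(z-1)}{z^p+(n-1)}$ and $B(z) = 1 - \frac{(\alpha-z)(n-1)}{z^{2p+1}+(n-2)z^{p+1}-z(n-1)}$, and reduce the whole statement to the monotonicity of $A$ and $B$ on $z>1$. So the first step is to prove this monotonicity.

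For $A$: write the fraction as $\frac{z^{p+1}-z^p}{z^p+n-1}$ and differentiate; its numerator collapses to $z^{2p} + (n-1)z^{p-1}\big[(p+1)z-p\big]$, which is strictly positive for $z>1$. Hence $A$ is strictly increasing on $z>1$, and since $z>1$ already gives $\frac{z^p(z-1)}{z^p+n-1}>0$ we also have $A(z)>1>0$. For $B$: factor the denominator as $z^{2p+1}+(n-2)z^{p+1}-z(n-1) = z\,(z^p-1)(z^p+n-1)$, a product of strictly positive, strictly increasing factors on $z>1$; since $\alpha-z>0$ is strictly decreasing on $(1,\alpha)$, the subtracted term is positive and strictly decreasing there, so $B$ is strictly increasing on $(1,\alpha)$ and $B(z)<1$.

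The upper bound is then immediate: since $1<\alpha^{\frac{1}{2p+1}}<z<\alpha$ we have $A(z)>0$, $B(z)<1$, and $A(z)<A(\alpha)$, whence $R = \frac{p}{1+p}A(z)B(z) < \frac{p}{1+p}A(z) < \frac{p}{1+p}A(\alpha)$, which is exactly the claimed bound. For the lower bound, put $z_0 = \alpha^{\frac{1}{2p+1}}$, so $1<z_0<z<\alpha$, giving $A(z)>A(z_0)>0$ and $B(z)>B(z_0)$. If $B(z_0)\ge 0$ then $B(z)>0$ and multiplying the two inequalities yields $A(z)B(z)>A(z_0)B(z_0)$, i.e.\ $R>\frac{p}{1+p}A(z_0)B(z_0)$; substituting $z_0^p=\alpha^{\frac{p}{2p+1}}$, $z_0^{p+1}=\alpha^{\frac{p+1}{2p+1}}$, $z_0^{2p+1}=\alpha$ turns the right-hand side into the displayed lower bound. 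If instead $B(z_0)<0$, the displayed lower bound is negative while $R>0$ (the equilibrium revenue is positive, since $R=\frac{z^{p+1}+n-1}{z^p+n-1}\,b_2^*$ with $b_2^*>0$), so the inequality holds in this case too.

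The only genuinely non-routine step is the derivative calculation showing that $A$ is increasing; everything else is endpoint substitution into Equation \ref{eqn_R}. The point requiring care is the sign bookkeeping in the product step for the lower bound, which is why the degenerate case $B(z_0)<0$ must be dispatched separately via positivity of $R$ rather than by a naive multiplication of the two factor inequalities.
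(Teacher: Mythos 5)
Your proof is correct and follows essentially the same route as the paper: the paper defines $\eta(x) = \frac{p}{1+p}A(x)B(x)$ (in your notation), asserts it is strictly increasing on $\left(\alpha^{\frac{1}{2p+1}},\alpha\right)$, and evaluates at the two endpoints. You supply the monotonicity details the paper dismisses as ``straightforward'' and, usefully, patch the one spot where monotonicity of the product is not automatic --- the case $B\!\left(\alpha^{\frac{1}{2p+1}}\right)<0$ --- via positivity of $R$.
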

\begin{proof}
Define
\[
\eta(x)=\frac{p}{1+p} \left[ 1+ \frac{x^p (x-1)}{x^p+ (n-1)}\right] \left[ 1- \frac{(\alpha-x)(n-1)}{x^{2p+1} + (n-2) x^{p+1} -x(n-1)}\right],
\]
it is straightforward to show that $\eta(x)$ is strictly increasing in $x$ over interval $\left( \alpha^{\frac{1}{2p+1}}, \alpha \right)$.
This corollary follows from the facts that (1) $R(n, \alpha, p) =\eta(z)$, and (2) $z \in \left( \alpha^{\frac{1}{2p+1}}, \alpha \right)$.
\end{proof}

\subsection{Revenue-Maximizing Mechanism Design}

Now we discuss how to design a quasi-proportional auction mechanism (i.e. how to choose the exponent $p$)
to maximize the auctioneer's equilibrium revenue for OLOS.
Notice that for a given $(n, \alpha)$, this mechanism design problem is a one-dimensional optimization problem.
Moreover, as we have discussed above, for a given triple $(n, \alpha, p)$, the auctioneer's equilibrium revenue $R(n, \alpha, p)$ can be
efficiently computed. Consequently, this mechanism design problem can be efficiently solved by line search. We present some computational analysis of $R(n, \alpha, p)$, followed by some theorems about how $n$, $\alpha$, and $p$ influence equilibrium revenue. 

In Figure~\ref{revenue_plot}, we plot $R(n, \alpha, p)$ versus $p$ for some given
$(n, \alpha)$'s. 
Notice that the value of $p$ that maximizes revenue increases as $\alpha$ decreases and as $n$ increases. In other words, as competition increases, steeper weight functions are needed to maximize revenue. 
Also notice that for all the considered $(n, \alpha)$'s, there exists $p$ such that
$R(n, \alpha, p)>1$, the equilibrium revenue of the second-price auction.
Moreover, for all the considered $(n, \alpha)$'s, the equilibrium revenue $R$
first increases with $p$, and then decreases with $p$. So there is a unique 
revenue-maximizing $p$ (quasi-proportional mechanism).

In Figure~\ref{revenue_plot_alpha}, we plot $R(n, \alpha, p)$ versus $\alpha$ for some given $(n,p)$'s.
We observe that $R$ is an increasing function of $\alpha$ for all the considered  $(n,p)$'s, which is reasonable since
a higher private value for the large bidder will lead to a higher equilibrium revenue, for a fixed number of small bidders and value $p$.

In Figure~\ref{revenue_plot_n}, we plot $R(n, \alpha, p)$ versus $n$ for some $(\alpha,p)$ values.
Notice that Figures~\ref{decrease_n_p3_small} and \ref{decrease_n_p5_small} 
show that for a fixed $(\alpha,p)$, the equilibrium revenue $R$ first increases with $n$ and then decreases with $n$.
As more small bidders are added, their equilibrium bids increase due to increased competition. 
For fixed $\alpha$ and $p$, this can have two conflicting effects: 
initially, the increased bids increase the equilibrium revenue, but as more small bidders are added, the small bidders win more of the allocation, reducing the portion bought by the single high-value bidder at a higher bid. This decreases revenue.
This revenue reduction can be alleviated by adapting $p$ to
$(n,\alpha)$, that is, choosing the revenue-maximizing $p$ 
based on $(n,\alpha)$. We will discuss this later in this section.
We also notice that the initial interval where $R$ increases as a function of $n$ depends on $(\alpha,p)$.
Figures~\ref{increase_n_p3_large} and \ref{increase_n_alpha3} show that if  $\alpha$ or $p$ is large,
$R$ is monotonically increasing for $n\leq 600$.

We are interested in the highest equilibrium revenue achieved by the quasi-proportional mechanisms, and which
quasi-proportional mechanism achieves the highest revenue. Thus, we define
\be
\label{eqn_star}
R^*(n, \alpha)=\max_{p>0} R(n, \alpha, p) \quad \text{ and } \quad p^*(n, \alpha)=\argmax_{p>0}  R(n, \alpha, p) .
\ee

Figure~\ref{R_star_alpha} shows how $R^*(n, \alpha)$ and $p^*(n, \alpha)$ vary with $\alpha$ and $n$. Specifically, Figures~\ref{vary_alpha_R} and \ref{vary_alpha_R_1} show that for a given
$n$, $R^*$ is a strictly increasing function of $\alpha$. 
That is, the auctioneer's highest equilibrium revenue increases with
the large bidder's private value.
Moreover, it is interesting to observe that $R^*$ is almost an affine function of $\alpha$, 
and different $n$'s corresponds to different slopes.
Figures~\ref{vary_alpha_p} and \ref{vary_alpha_p_1} show that for a given $n$,
$p^*$ is a strictly decreasing function of $\alpha$.
Moreover, the decrease rate (the negative of the first derivative) decreases as
$\alpha$ increases. 
Note that $p^*>1$ when $\alpha$ is small, which implies that the revenue-maximizing weight function is convex.
We also observe that $p^*$ is primarily determined by $\alpha$: 
for a fixed $\alpha$ and very different $n$'s (e.g. $2$ and $500$), the variation of $p^*$ is small.


Figure~\ref{vary_n_R} shows that for a given $\alpha$,
$R^*$ is a strictly increasing function of $n$. That is, more small bidders will increase the auctioneer's equilibrium revenue.
We also observe that as $n$ increases, $R^*(n+1, \alpha)- R^*(n, \alpha)$ decreases.
That is, each additional small bidder makes a smaller marginal contribution to the auctioneer's highest equilibrium revenue.
Figure~\ref{vary_n_p} shows that for a given $\alpha$, $p^*$ first decreases with $n$ and then increases with $n$.
Similar to Figures~\ref{vary_alpha_p} and \ref{vary_alpha_p_1}, Figure~\ref{vary_n_p} demonstrates that for a given $\alpha$,
the variation of $p^*$ with $n$ is relatively small, which implies the robustness of $p^*$ (mechanism design) to the mis-specification of parameter
$n$. 

\begin{figure}[t]
\subfigure[$\alpha=3$]{\includegraphics[width = 0.48\textwidth]{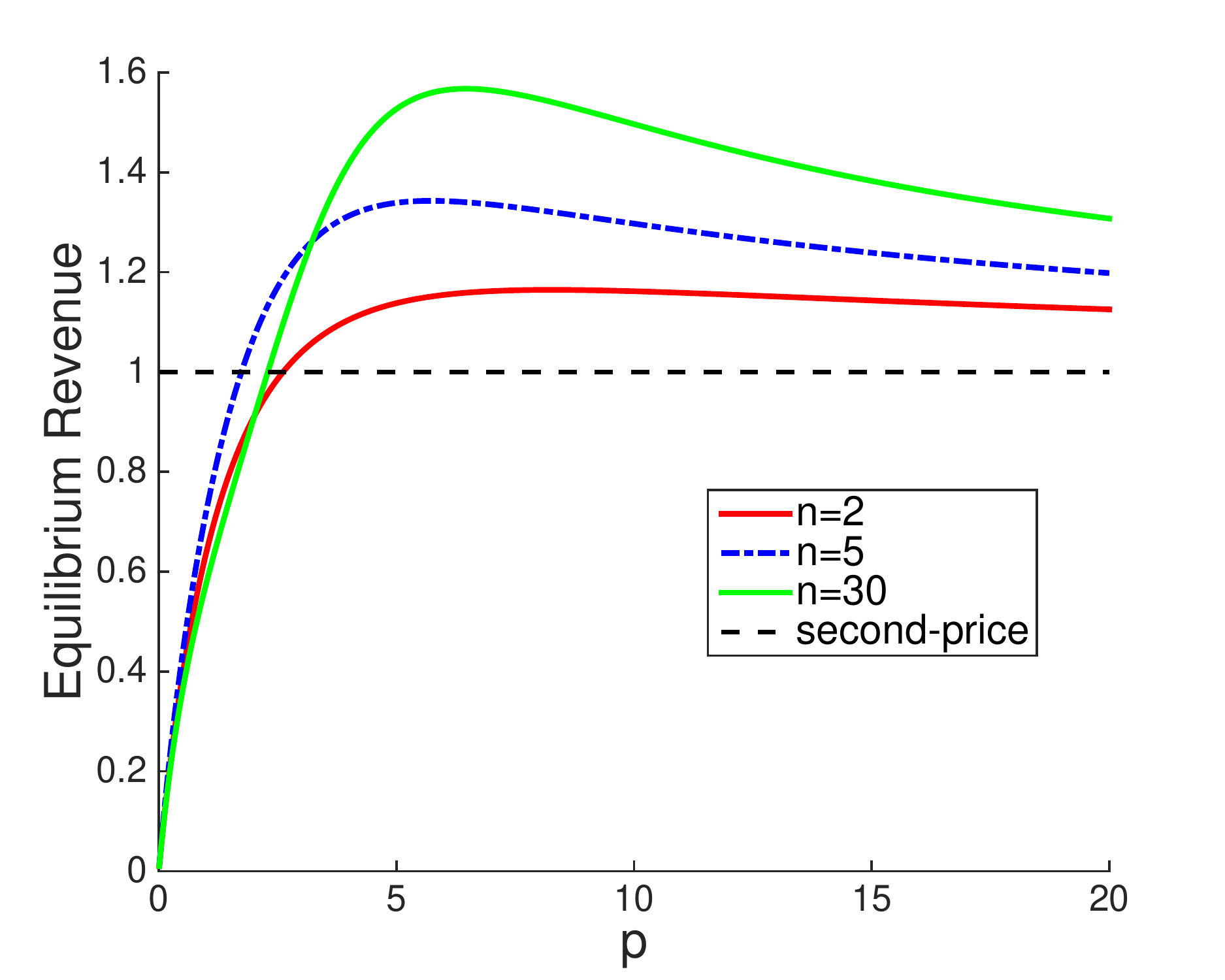} \label{revenue_plot_1}}
\subfigure[$\alpha=10$]{\includegraphics[width = 0.48\textwidth]{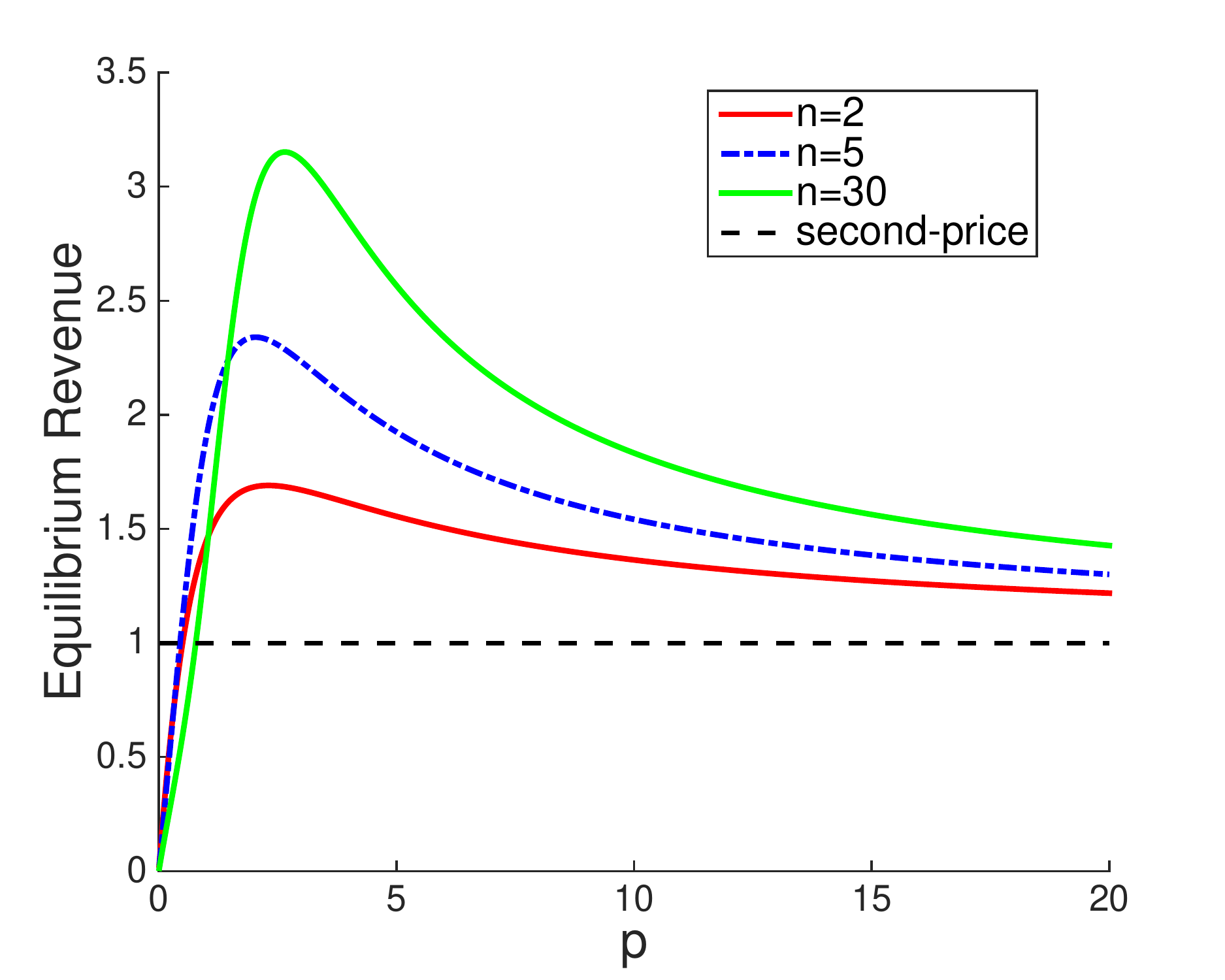} \label{revenue_plot_2}} \\
\subfigure[$n=2$]{\includegraphics[width = 0.48\textwidth]{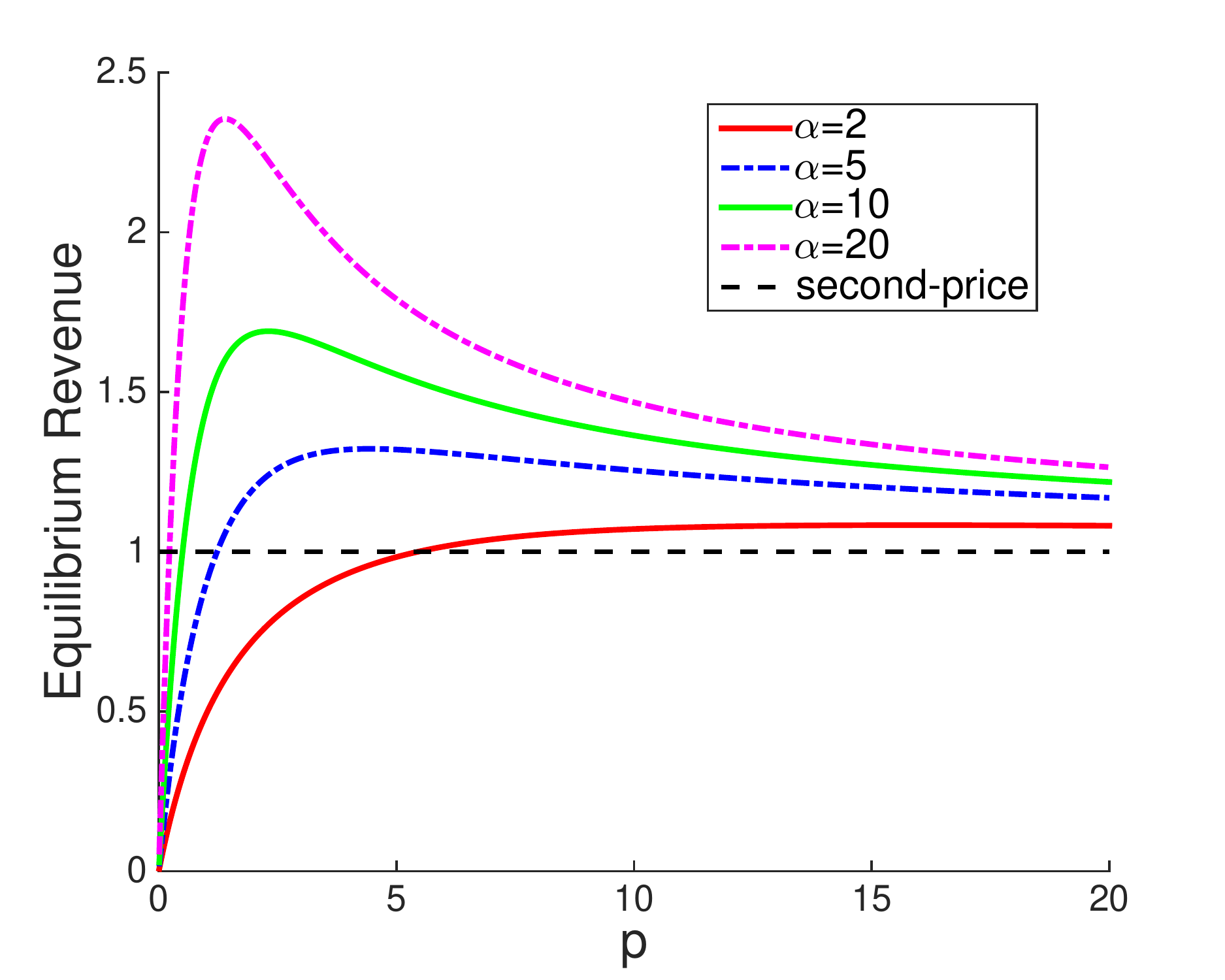} \label{revenue_plot_3}}
\subfigure[$n=20$]{\includegraphics[width = 0.48\textwidth]{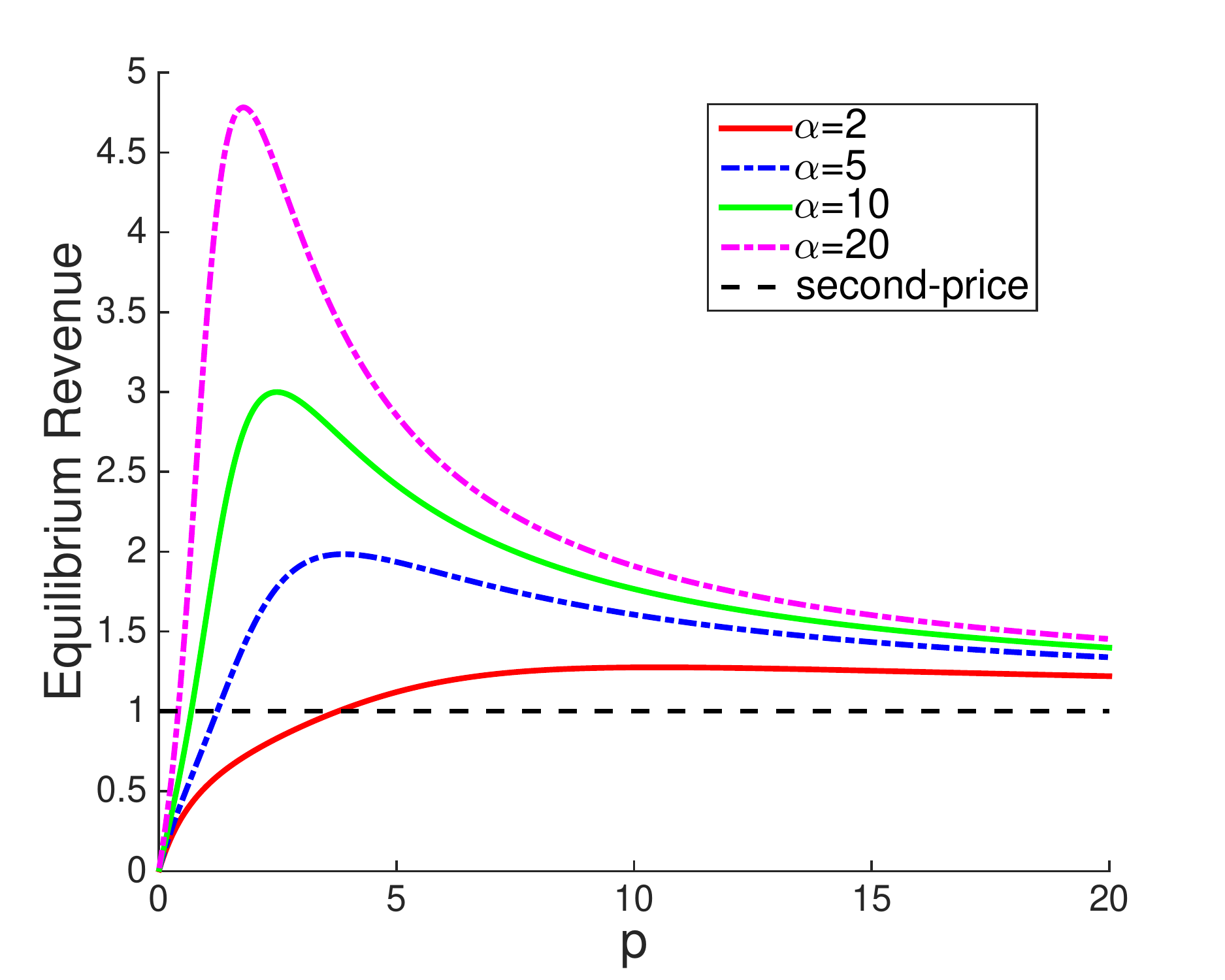} \label{revenue_plot_4}}
\caption{Equilibrium Revenue $R(n,\alpha,p)$ vs. $p$}
\label{revenue_plot}
\end{figure}

\begin{figure}[ht]
\subfigure[$p=3$]{\includegraphics[width = 0.48\textwidth]{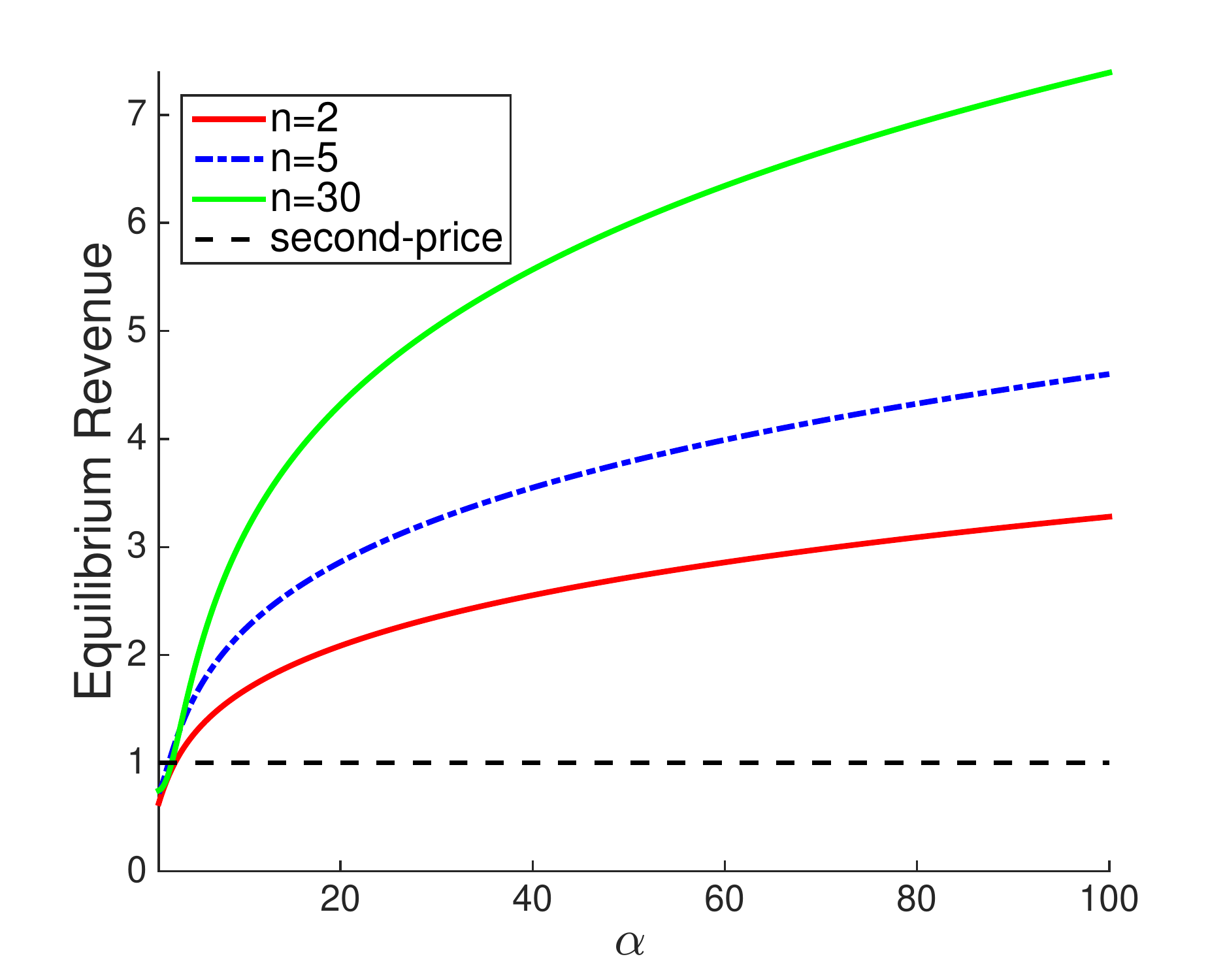} \label{increase_alpha_p3}}
\subfigure[$p=5$]{\includegraphics[width = 0.48\textwidth]{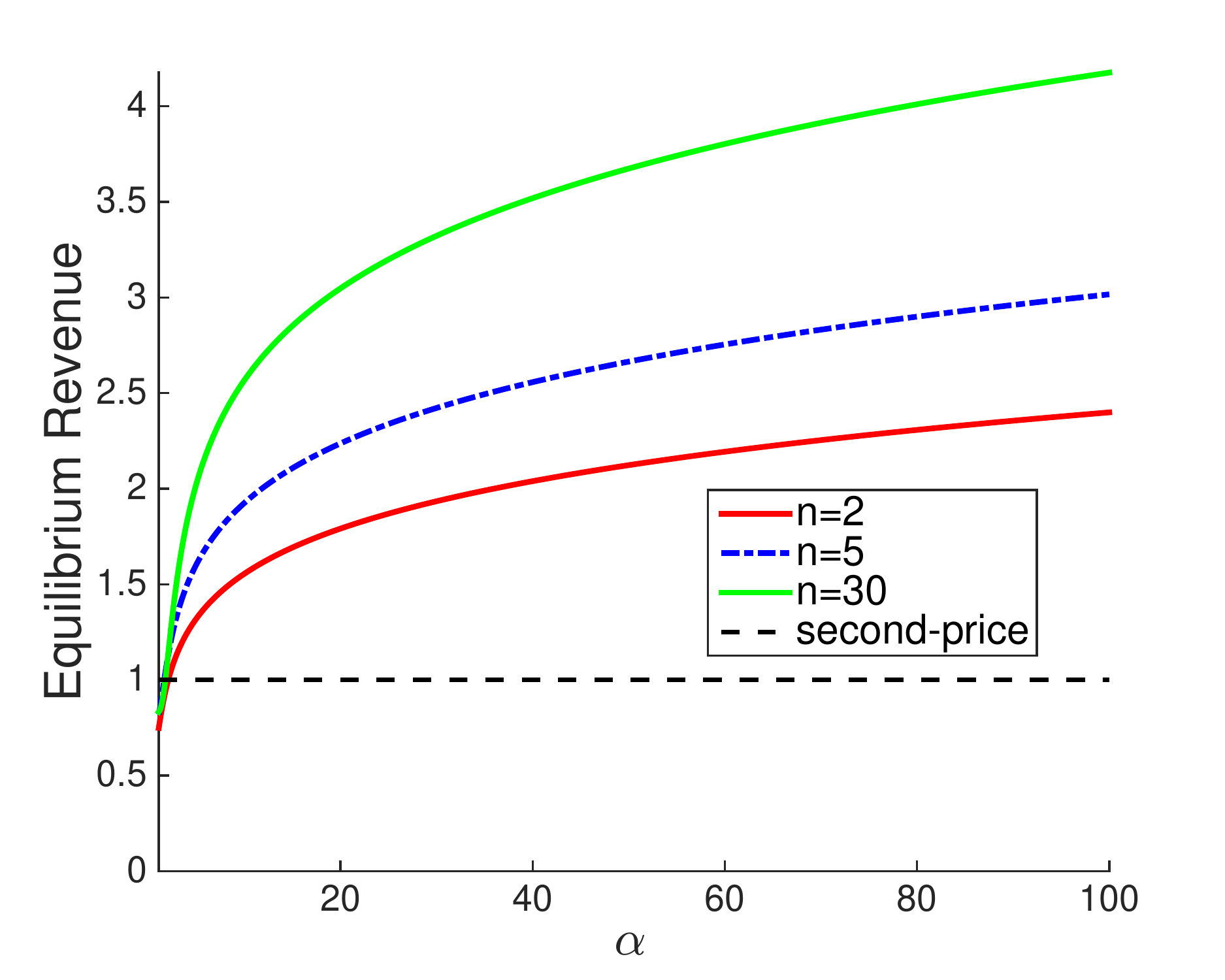} \label{increase_alpha_p5}}
\caption{Equilibrium Revenue $R(n,\alpha,p)$ vs. $\alpha$}
\label{revenue_plot_alpha}
\end{figure}

\begin{figure}[ht]
\subfigure[$p=3$, small $\alpha$]{\includegraphics[width = 0.48\textwidth]{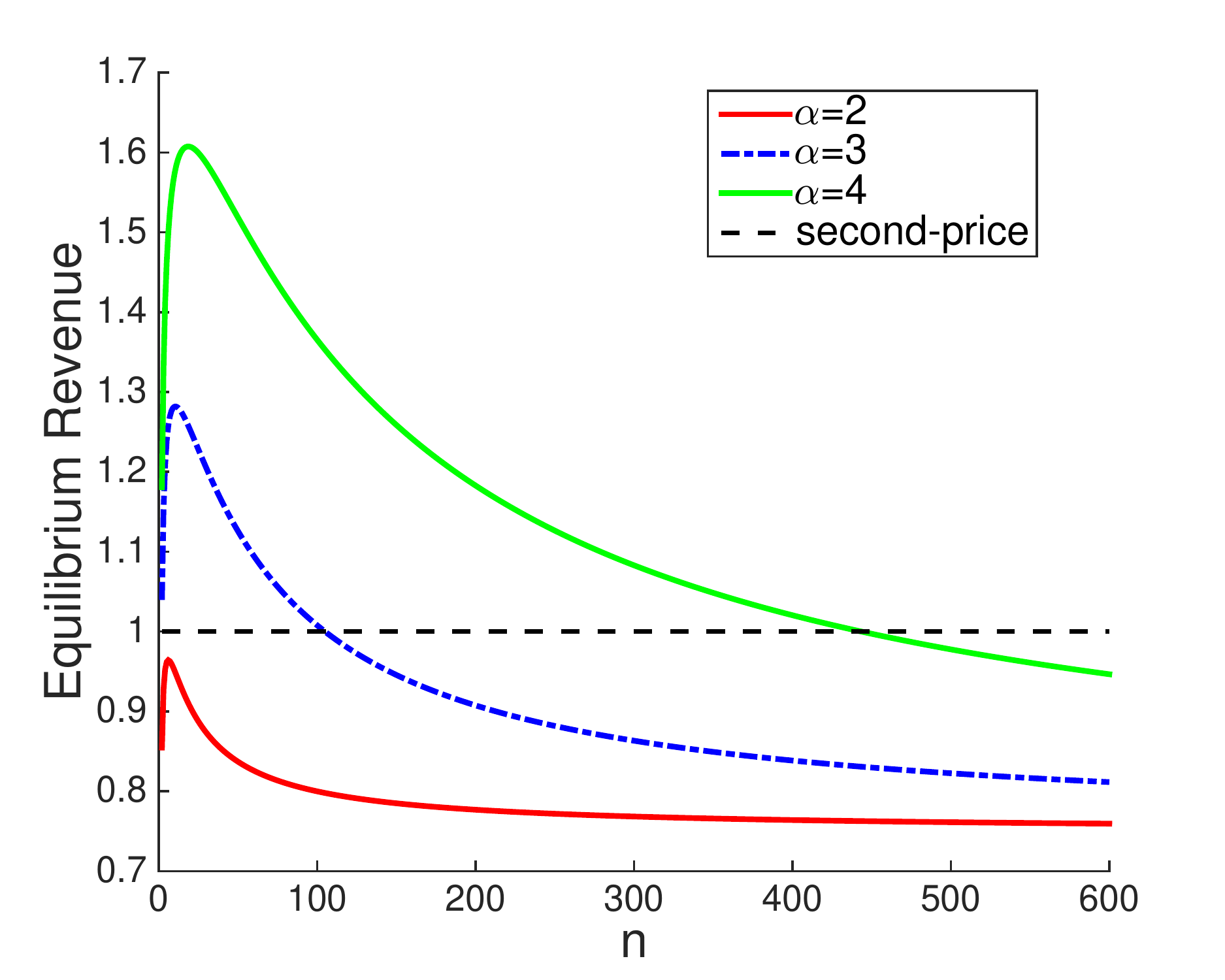} \label{decrease_n_p3_small}}
\subfigure[$p=5$, small $\alpha$]{\includegraphics[width = 0.48\textwidth]{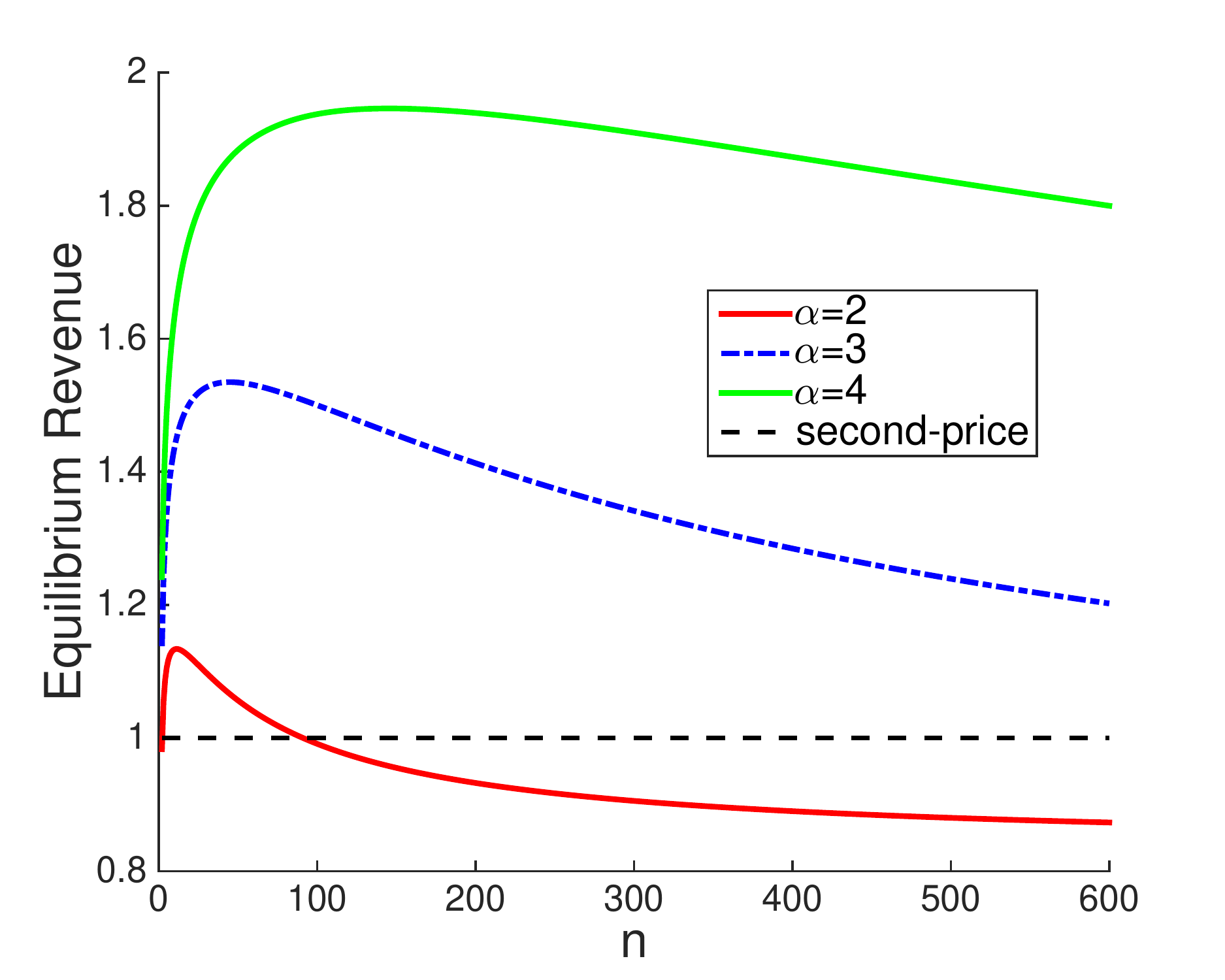} \label{decrease_n_p5_small}}\\
\subfigure[$p=3$, large $\alpha$]{\includegraphics[width = 0.48\textwidth]{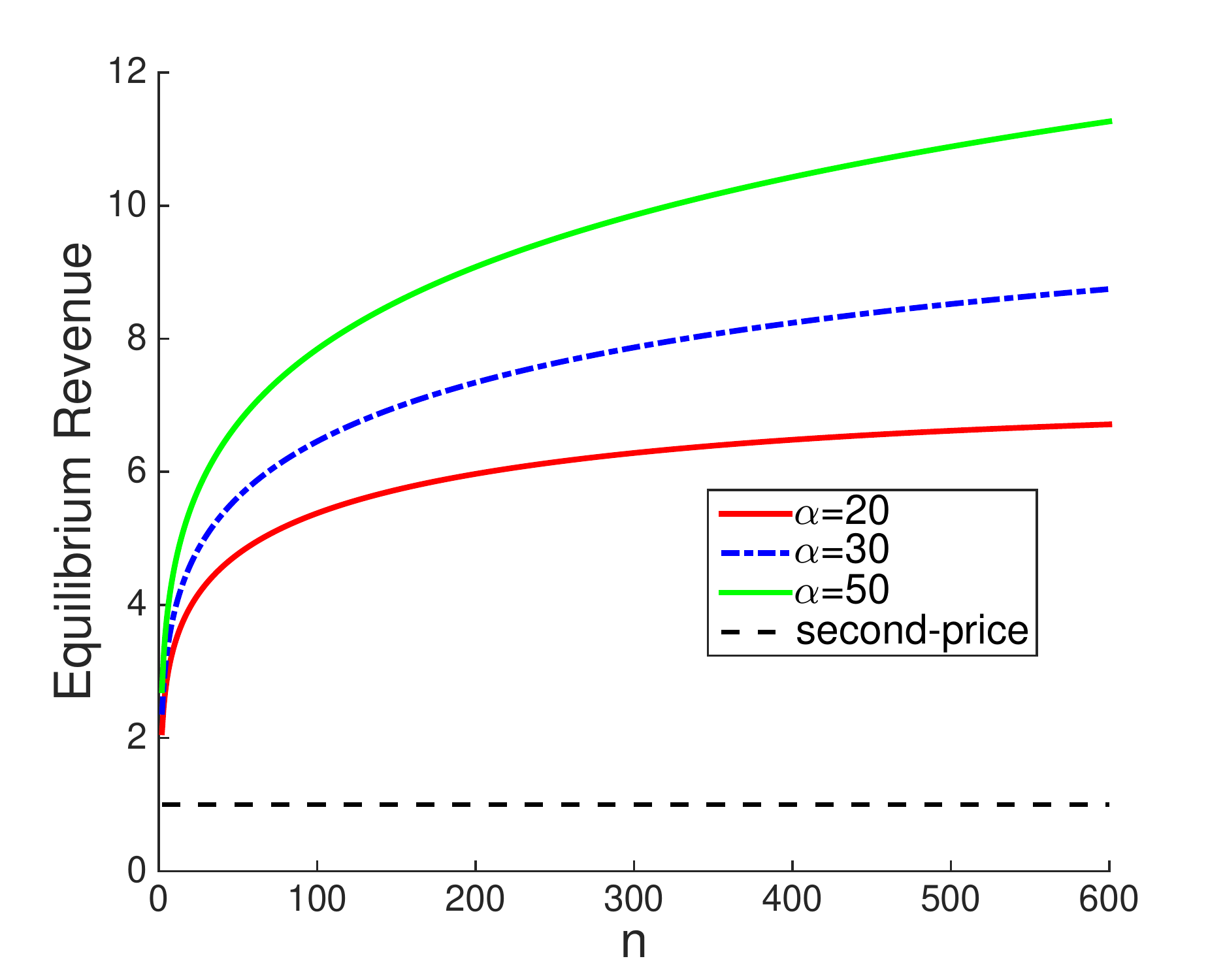} \label{increase_n_p3_large}}
\subfigure[$\alpha=3$, large $p$]{\includegraphics[width = 0.48\textwidth]{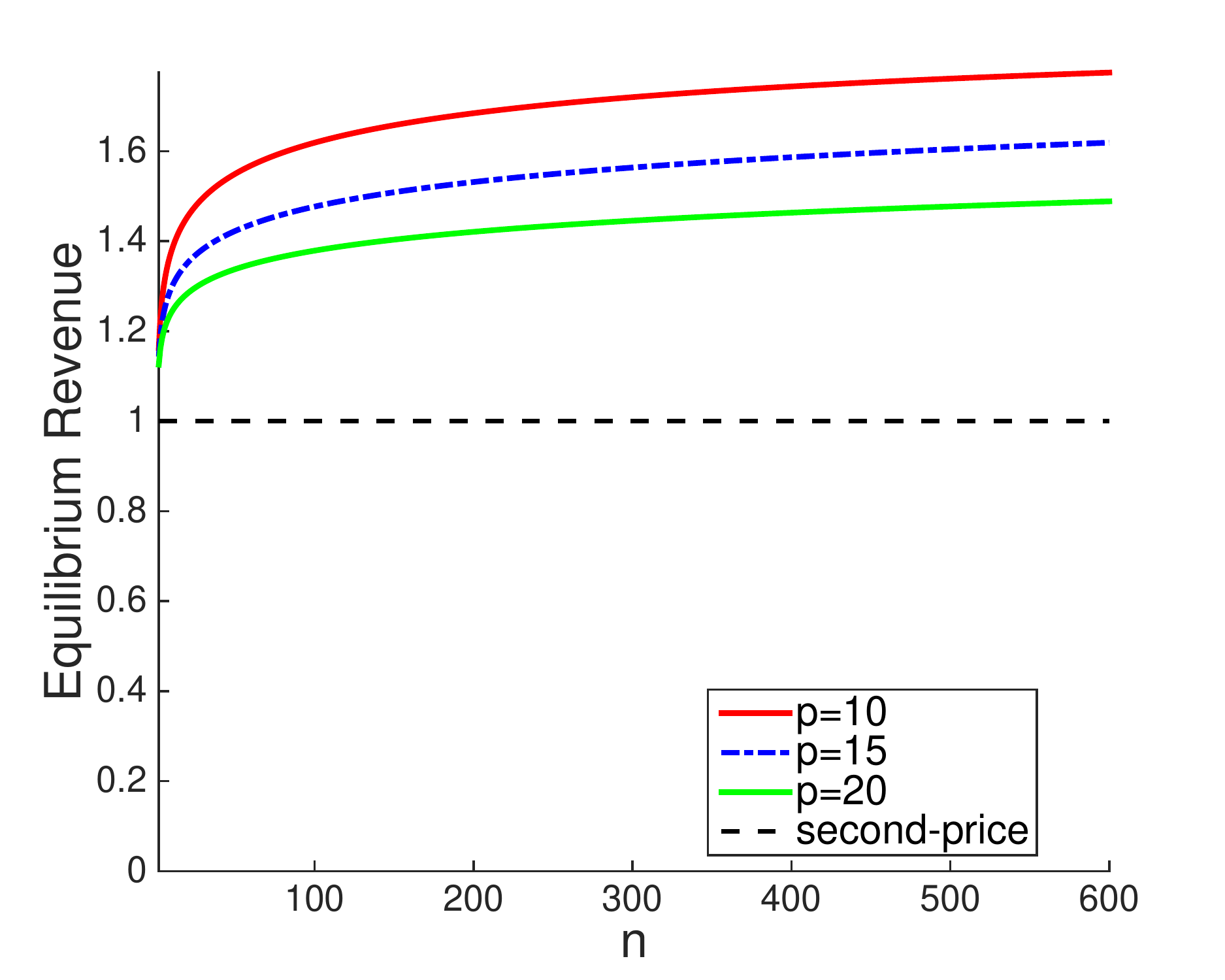} \label{increase_n_alpha3}}
\caption{Equilibrium Revenue $R(n,\alpha,p)$ vs. $n$}
\label{revenue_plot_n}
\end{figure}

\begin{figure}[th]
\subfigure[$R^*$ vs. $\alpha$]{\includegraphics[width = 0.48\textwidth]{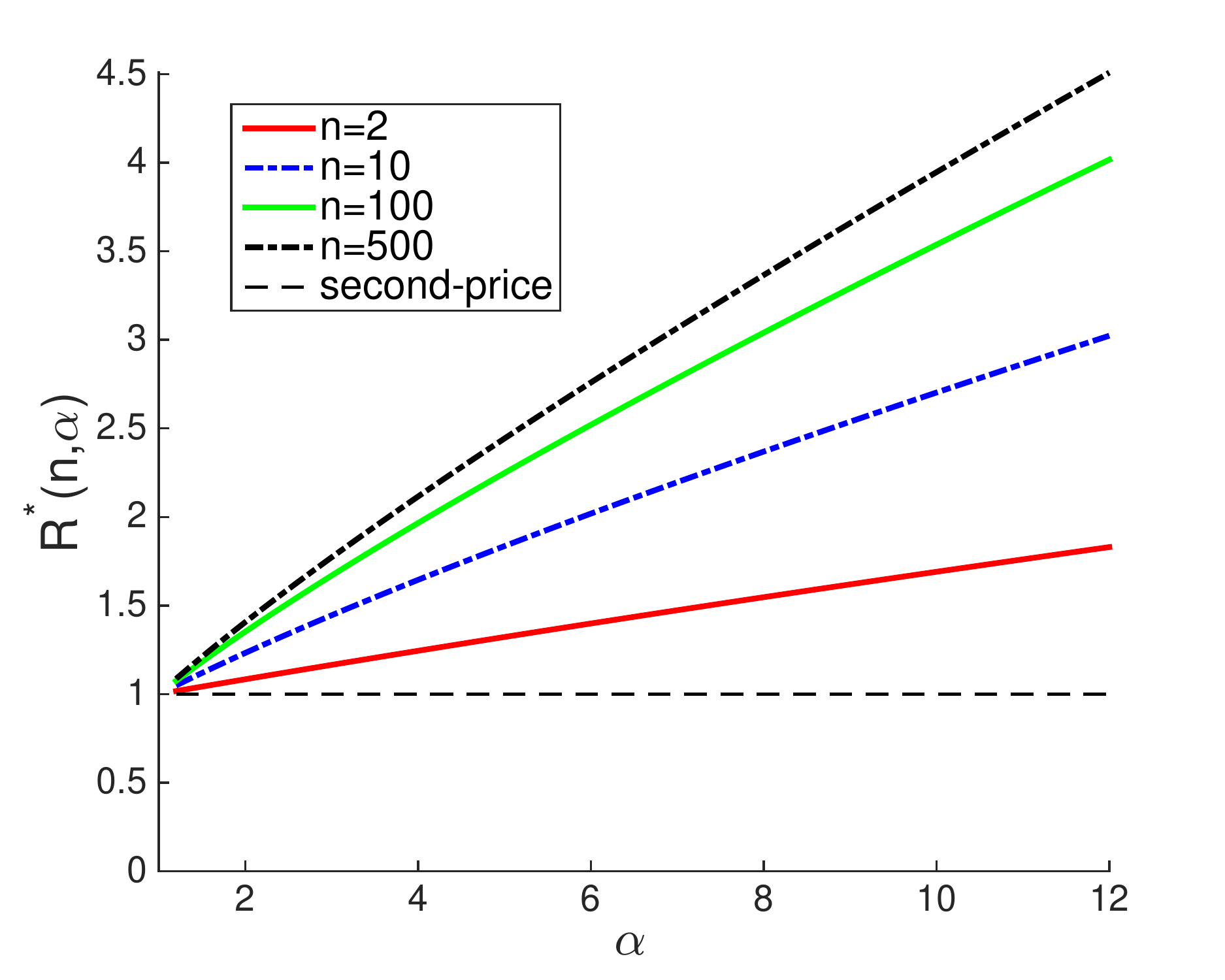} \label{vary_alpha_R}}
\subfigure[$p^*$ vs. $\alpha$]{\includegraphics[width = 0.48\textwidth]{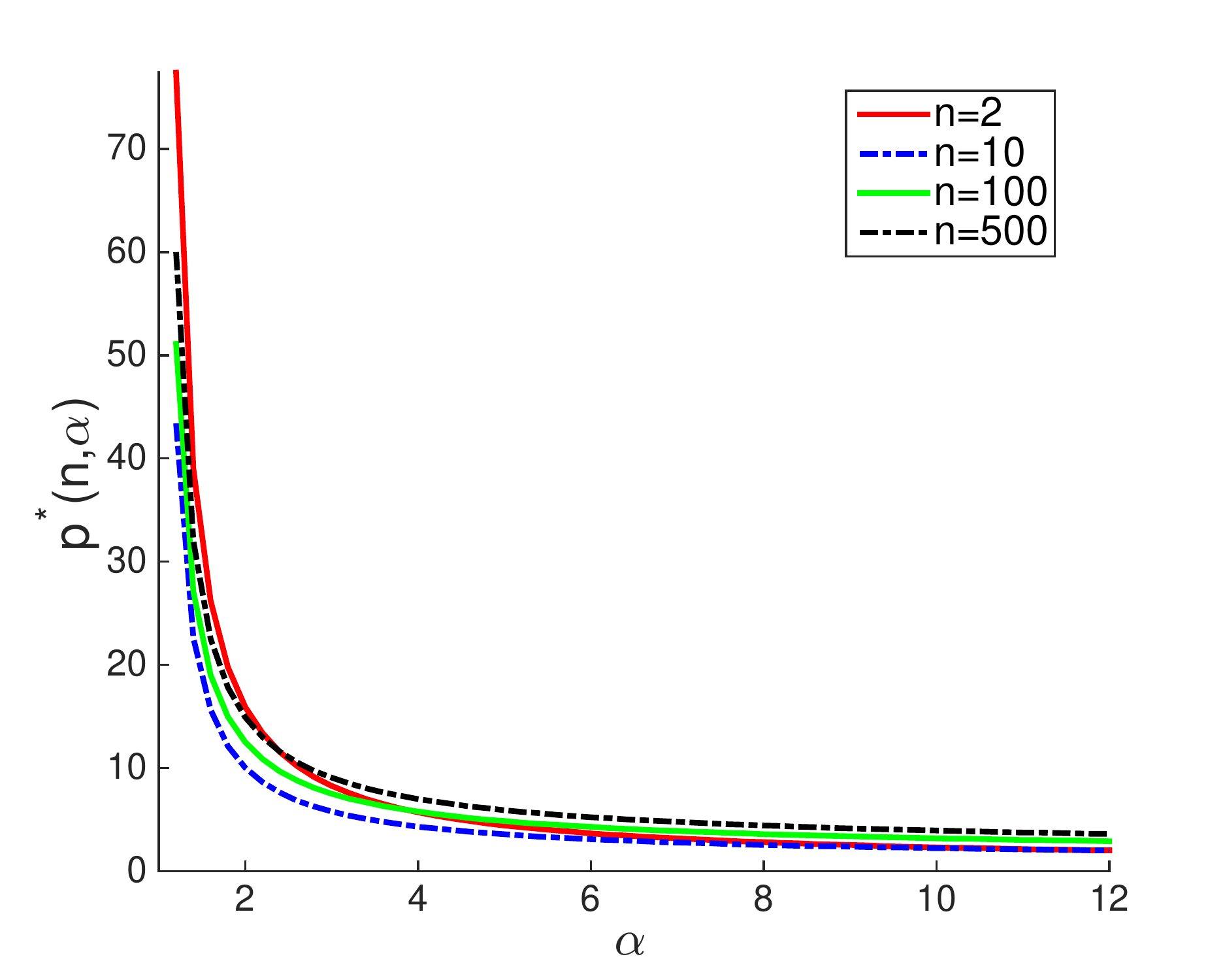} \label{vary_alpha_p}}\\
\subfigure[$R^*$ vs. $\alpha$]{\includegraphics[width = 0.48\textwidth]{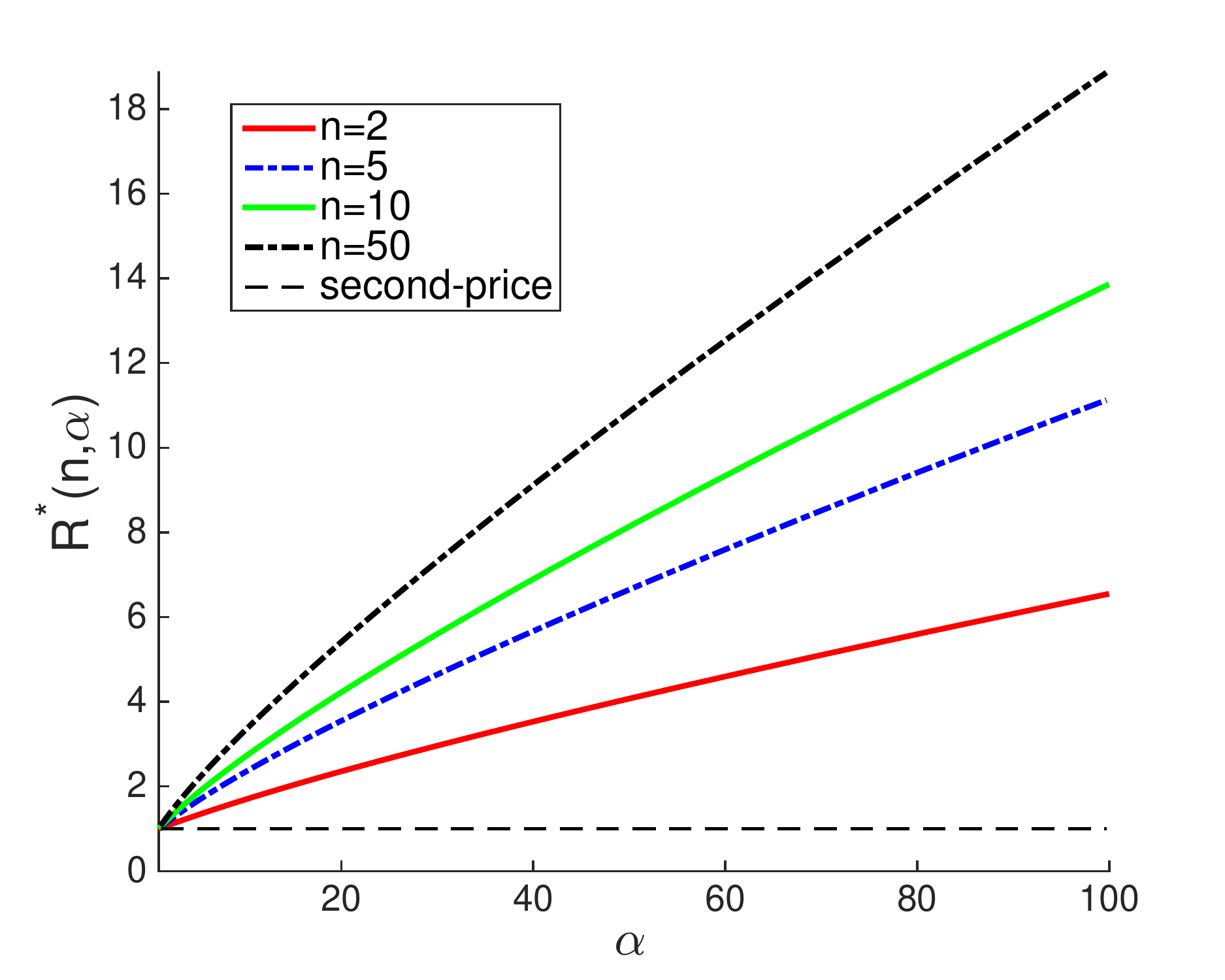} \label{vary_alpha_R_1}}
\subfigure[$p^*$ vs. $\alpha$]{\includegraphics[width = 0.48\textwidth]{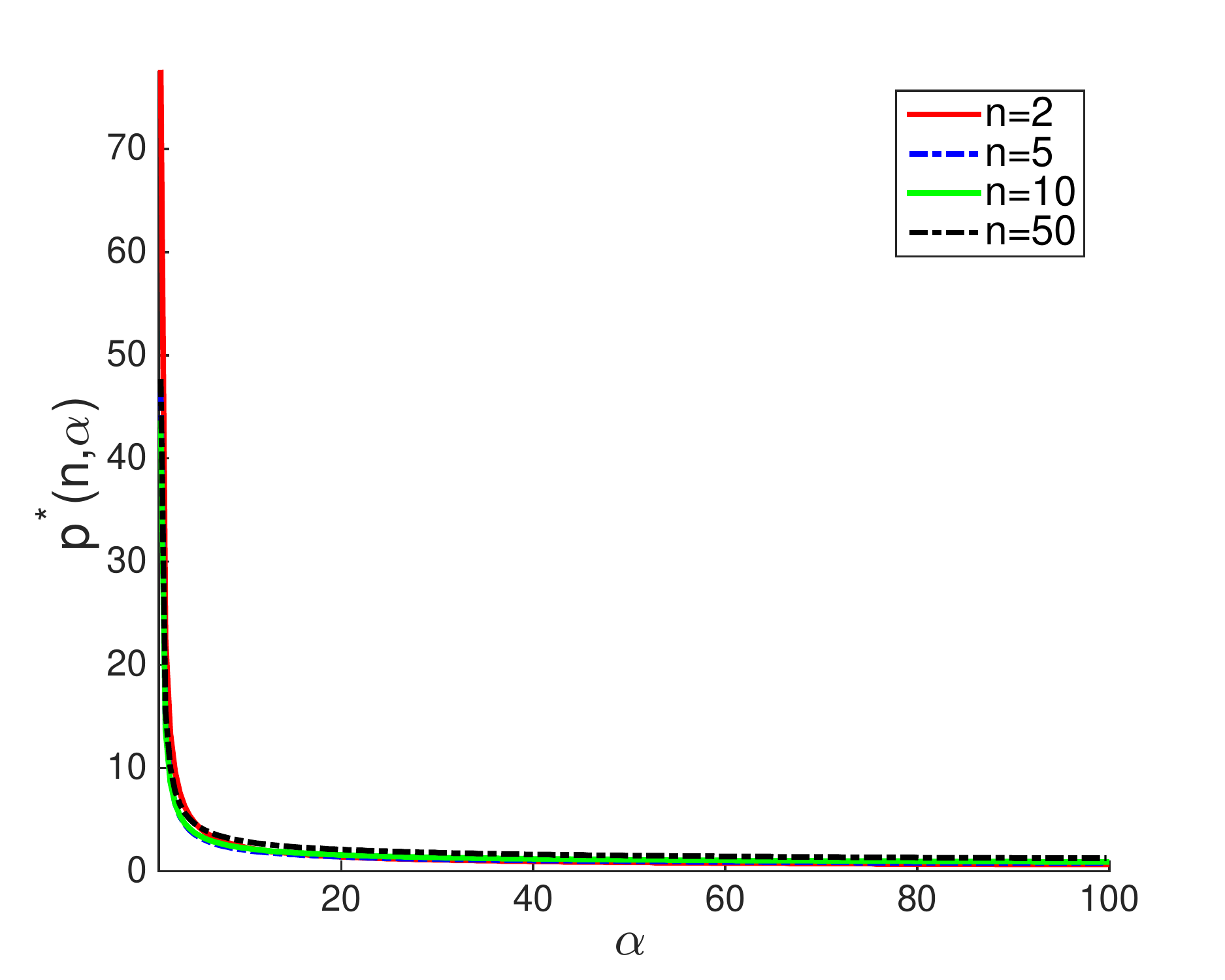} \label{vary_alpha_p_1}}\\
\subfigure[$R^*$ vs. $n$]{\includegraphics[width = 0.48\textwidth]{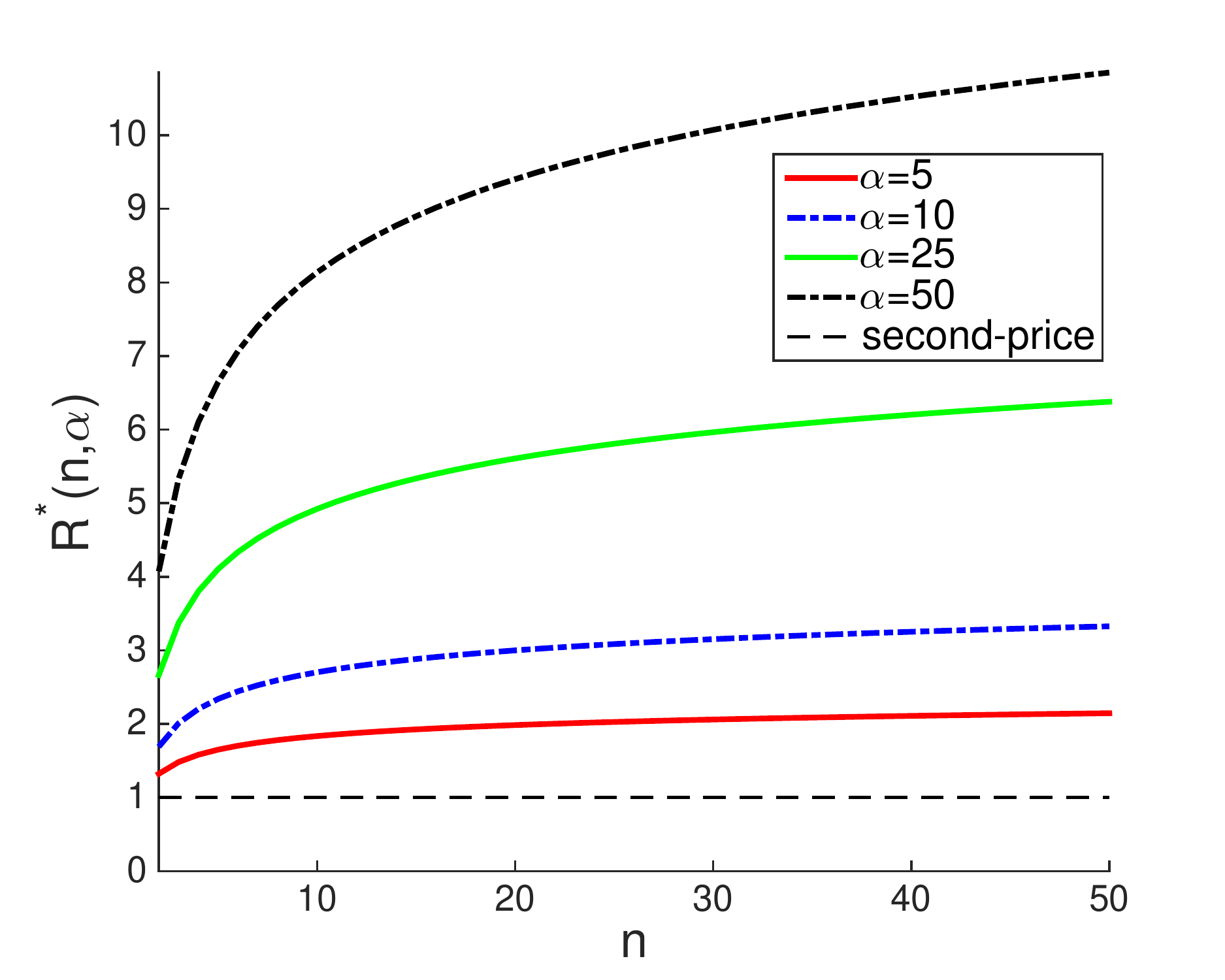} \label{vary_n_R}}
\subfigure[$p^*$ vs. $n$]{\includegraphics[width = 0.48\textwidth]{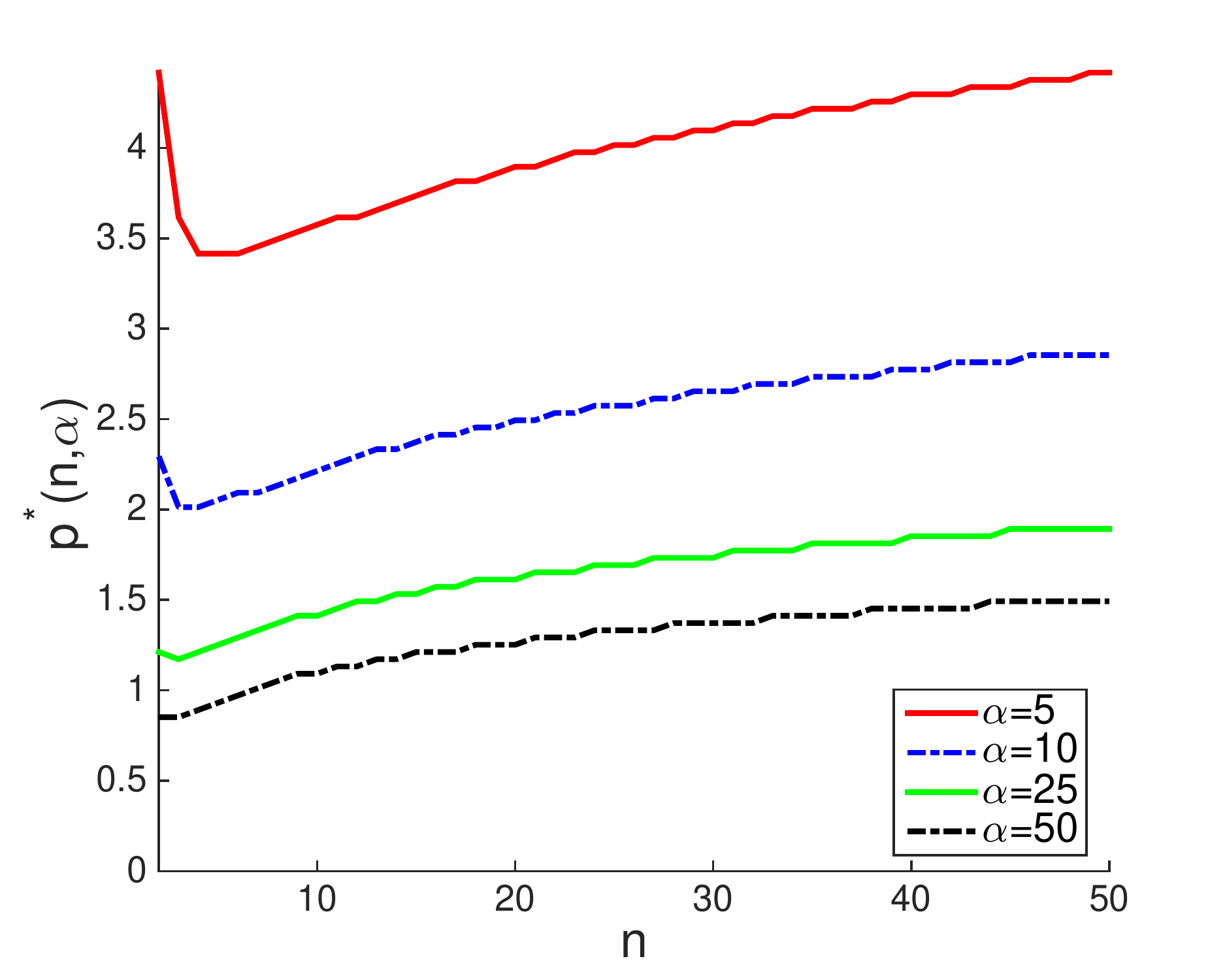} \label{vary_n_p}}
\caption{$R^* (n, \alpha)$ and $p^*(n, \alpha)$ vs. $\alpha$ or $n$}
\label{R_star_alpha}
\end{figure}

Recall that for OLOS, the equilibrium revenue from a second-price auction is always $1$.
The following theorem states that for any $(n, \alpha)$, the equilibrium revenue will diminish as $p \downarrow 0$, 
while the equilibrium revenue will be at least as much as that of the second-price auction as $p \rightarrow \infty$.
\begin{theorem}
For any $n\geq2$ and $\alpha>1$, we have
\[
\lim_{p \downarrow 0} R(n,\alpha, p)=0 \text{ and } \liminf_{p \rightarrow \infty}  R(n,\alpha, p) \geq 1.
\]
\end{theorem}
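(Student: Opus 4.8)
The plan is to reduce everything to the product identity $R(n,\alpha,p)=\frac{z^{p+1}+n-1}{z^p+n-1}\,b_2^*$ derived just before Equation~\ref{eqn_R}, together with the bounds already proved in Theorem~\ref{thm_uniqueness}: that $z=b_1^*/b_2^*$ satisfies $\alpha^{\frac{1}{1+2p}}<z<\alpha$ (so in particular $z>1$, since $\alpha>1$) and that $\frac{p}{1+p+\frac{1}{n-1}}\le b_2^*<\frac{p}{1+p}$. The single elementary observation that drives the argument is that, writing $A=z^p>0$ and $B=n-1>0$,
\[
\frac{z^{p+1}+n-1}{z^p+n-1}=\frac{A}{A+B}\,z+\frac{B}{A+B}\cdot 1
\]
is a convex combination of $z$ and $1$; since $z>1$, this quantity lies in $[1,z]\subset[1,\alpha)$. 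Consequently $b_2^*\le R(n,\alpha,p)<\alpha\,b_2^*$ for every $(n,\alpha,p)$, and $R(n,\alpha,p)>0$ since $b_2^*>0$ at a Nash equilibrium.

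For the first limit I would combine the upper bound $R(n,\alpha,p)<\alpha\,b_2^*$ with $b_2^*<\frac{p}{1+p}$ (read off from the second line of Equation~\ref{eqn:case_n:1}, whose left factor is positive and right side nonnegative) to obtain $0<R(n,\alpha,p)<\frac{\alpha p}{1+p}$. Letting $p\downarrow 0$, the right-hand side tends to $0$, so the squeeze theorem gives $\lim_{p\downarrow 0}R(n,\alpha,p)=0$.

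For the second limit I would combine the lower bound $R(n,\alpha,p)\ge b_2^*$ with $b_2^*\ge\frac{p}{1+p+\frac{1}{n-1}}$ — this is exactly the bound on the (symmetric) small bids coming from Corollary~\ref{corollary_1} specialized to $v_{\min}=1$ — to get $R(n,\alpha,p)\ge\frac{p}{1+p+\frac{1}{n-1}}$. Since $\frac{p}{1+p+\frac{1}{n-1}}\to 1$ as $p\to\infty$, it follows that $\liminf_{p\to\infty}R(n,\alpha,p)\ge 1$.

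I do not expect a real obstacle here: once the revenue is written as $\frac{z^{p+1}+n-1}{z^p+n-1}\,b_2^*$, the entire proof rests on the sign fact $z>1$ plus the two-sided bound on $b_2^*$, all of which are already available. The only point requiring a moment of care is to invoke the correct established results (the product identity preceding Equation~\ref{eqn_R}, the strict upper bound $b_2^*<\frac{p}{1+p}$ from Equation~\ref{eqn:case_n:1}, and the lower bound from Corollary~\ref{corollary_1}) and to note that $z$ stays in the fixed compact interval $(1,\alpha)$, so none of the limits above is indeterminate.
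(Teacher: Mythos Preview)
Your proof is correct and follows essentially the same route as the paper: both parts rest on the two-sided bound $\frac{p}{1+p+\frac{1}{n-1}}\le b_2^*<\frac{p}{1+p}$ from Theorem~\ref{thm_uniqueness}/Corollary~\ref{corollary_1} together with the revenue identity $R=\frac{z^{p+1}+n-1}{z^p+n-1}\,b_2^*$. The paper cites Corollary~\ref{corollary_3} for the upper bound as $p\downarrow 0$ and argues separately that $b_1^*,b_2^*\to 1$ for the $p\to\infty$ part, whereas your convex-combination observation $R\in[b_2^*,\alpha\,b_2^*)$ packages both bounds at once; this is a mild streamlining, not a different argument.
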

\begin{proof}
Notice that $R(n, \alpha, p) < \frac{p}{1+p} \left[ 1+ \frac{\alpha^p (\alpha-1)}{\alpha^p + (n-1)}\right]$ from Corollary \ref{corollary_3}, thus
\[
\lim_{p \downarrow 0} R(n, \alpha, p)  \leq \lim_{p \downarrow 0}  \frac{p}{1+p} \left[ 1+ \frac{\alpha^p (\alpha-1)}{\alpha^p + (n-1)}\right] =0.
\]
Hence, we have $\lim_{p \downarrow 0} R(n, \alpha, p) =0$.
Moreover, from Theorem \ref{thm_uniqueness}, 
$
\frac{p}{1+p + \frac{1}{n-1}}\leq b_2^* < \frac{p}{1+p}$. Thus, $\lim_{p \rightarrow \infty} b_2^*=1$. 
Moreover, since $b_1^*> \alpha^{\frac{1}{2p+1}} b_2^*> b_2^*$, we have
$\liminf_{p \rightarrow \infty} b_1^* \geq \lim_{p \rightarrow \infty} b_2^*=1$.
Hence we have $\liminf_{p \rightarrow \infty}  R(n,\alpha, p) \geq 1$.
\end{proof}

The following theorem shows that there does not exist a $p$ (i.e. a quasi-proportional mechanism with $f(x)=x^p$) 
such that for all $(n, \alpha)$,
$R (n, \alpha, p)>1$ (i.e. has an equilibrium revenue higher than the second-price auction). 
In other words, to achieve an equilibrium revenue higher than the second-price auction, we have to choose different quasi-proportional auction mechanisms
for different $(n, \alpha)$'s.
\begin{theorem}
For any $(n, p)$
\[
\displaystyle \limsup_{\alpha \downarrow 1} R(n, \alpha, p) \leq \frac{p}{1+p}, 
\]
and for any $(\alpha,p)$ 
\[
\displaystyle \limsup_{n \rightarrow \infty} R(n, \alpha, p) \leq \frac{p}{1+p}.
\]
\end{theorem}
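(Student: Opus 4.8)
The plan is to obtain both inequalities as immediate consequences of the upper bound already established in Corollary~\ref{corollary_3}, namely
\[
R(n,\alpha,p) < \frac{p}{1+p}\left[1 + \frac{\alpha^p(\alpha-1)}{\alpha^p + (n-1)}\right] \qquad \text{for all } n\ge 2,\ \alpha>1,\ p>0 .
\]
Write $U(n,\alpha,p)$ for the right-hand side. Since $R(n,\alpha,p) < U(n,\alpha,p)$ pointwise, it is enough to show that $U(n,\alpha,p) \to \frac{p}{1+p}$ in each of the two regimes; then $\limsup R \le \lim U = \frac{p}{1+p}$.

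First I would handle $\alpha \downarrow 1$ with $(n,p)$ fixed. The factor $\frac{\alpha^p(\alpha-1)}{\alpha^p+(n-1)}$ is a continuous function of $\alpha$ on $(1,\infty)$ whose numerator tends to $0$ and whose denominator tends to $n>0$ as $\alpha \downarrow 1$; hence it tends to $0$, so $U(n,\alpha,p)\to \frac{p}{1+p}$ and $\limsup_{\alpha\downarrow 1} R(n,\alpha,p)\le \frac{p}{1+p}$. Next I would handle $n\to\infty$ with $(\alpha,p)$ fixed: now $\alpha^p(\alpha-1)$ is a fixed positive constant while $\alpha^p+(n-1)\to\infty$, so again $\frac{\alpha^p(\alpha-1)}{\alpha^p+(n-1)}\to 0$, giving $U(n,\alpha,p)\to\frac{p}{1+p}$ and hence $\limsup_{n\to\infty} R(n,\alpha,p)\le\frac{p}{1+p}$.

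There is essentially no obstacle: the genuine content is in Corollary~\ref{corollary_3}, which already packages the monotonicity of $\eta$ on $\left(\alpha^{1/(2p+1)},\alpha\right)$ together with the localization $z\in\left(\alpha^{1/(2p+1)},\alpha\right)$ from Theorem~\ref{thm_uniqueness}. If a self-contained argument for the first limit were wanted, one could instead squeeze $z$ between $\alpha^{1/(2p+1)}$ and $\alpha$ to get $z\to 1$ as $\alpha\downarrow 1$ and pass to the limit directly in $R=\eta(z)$ from Equation~\ref{eqn_R}; the only care needed there is the $0/0$ form in the second bracket, resolved by the factorization $z^{2p+1}+(n-2)z^{p+1}-z(n-1) = z\left(z^p-1\right)\left(z^p+n-1\right)$. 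Since the Corollary~\ref{corollary_3} route avoids even that, I would present it and keep the proof to a few lines.
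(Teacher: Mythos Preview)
Your proposal is correct and follows essentially the same route as the paper: invoke the upper bound from Corollary~\ref{corollary_3} and let the term $\frac{\alpha^p(\alpha-1)}{\alpha^p+(n-1)}$ tend to zero in each regime. The additional remark about squeezing $z\to 1$ and the factorization is a nice alternative, but the paper does not use it.
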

\begin{proof}
Notice that from Corollary \ref{corollary_3}, $R(n, \alpha, p) < \frac{p}{1+p} \left[ 1+ \frac{\alpha^p (\alpha-1)}{\alpha^p + (n-1)}\right]$.
Thus we have
\[
\displaystyle \limsup_{\alpha \downarrow 1} R(n, \alpha, p) \leq
\lim_{\alpha \downarrow 1} \frac{p}{1+p} \left[ 1+ \frac{\alpha^p (\alpha-1)}{\alpha^p + (n-1)}\right] = \frac{p}{1+p} ,
\]
and
\[
\displaystyle \limsup_{n \rightarrow \infty} R(n, \alpha, p) \leq
\lim_{n \rightarrow \infty} \frac{p}{1+p} \left[ 1+ \frac{\alpha^p (\alpha-1)}{\alpha^p + (n-1)}\right] = \frac{p}{1+p} .
\]
\end{proof}


\section{Discussion}
\label{sec_conclusion}

This paper focuses on revenue-maximizing mechanism design for quasi-proportional auctions.
Specifically, for the general $n$ bidder case, we have proved the existence of the pure-strategy Nash equilibrium
and given lower bounds for bids at an equilibrium.
For the OLOS case, we have also (1) proved the uniqueness of the pure-strategy Nash equilibrium, and
(2) developed an approach to efficiently compute the equilibrium revenue.
We have also presented how to numerically solve the revenue-maximizing mechanism design problem in the 
OLOS case. 
We used computation to show that steeper weight functions maximize revenue when there is more competition, and we used analysis to show the importance of selecting $p$ based on $\alpha$ and, to a lesser extent, on $n$. 

In practice, the auctioneer does not know the precise value of $\alpha$ (and perhaps $n$).  However, the auctioneer may know that (with high probability), $(\alpha,n)$ lies in a some subset $\mathcal{D}$ of $(1,\infty) \times \{2,3,\ldots \}$.
Since $p^*$ is robust to small changes in both $\alpha$ and $n$ (see Figure~\ref{R_star_alpha}), if $\mathcal{D}$ is ``small", then we can choose
$p$ based on any $(\alpha,n) \in \mathcal{D}$.
If $\mathcal{D}$ is ``large", then we can choose an exponent $\tilde{p}$ by robust optimization:
\be
\tilde{p} \in \argmax_{p: p>0} \min_{(\alpha, n) \in \mathcal{D}} R(n, \alpha, p).
\ee
Since there are only three decision variables, this (non-convex) optimization problem can be solved numerically.

One challenge for further research is to prove that there is a unique pure-strategy Nash equilibrium for $p>1$. Another challenge is to produce a closed-form solution for equilibrium bids, based on bidders' private values. Analyzing equilibria for other weight functions, such as exponentials, would be an interesting way to extend this work. (Use $f(x) = e^x-1$ to avoid awarding an allocation for a bid of zero.)

It would also be interesting to explore strategies for the auctioneer's selections of $p$ over repeated auctions. In a model where bidders need some allocation in order to learn their private values, the auctioneer may benefit from starting with a lower $p$ value to have more of an even allocation as bidders learn, then increasing $p$ from auction to auction. As the bidders learn their private values, the auctioneer may gain information about the bidders' private values from their evolving bids. It would also be interesting to explore optimal settings for $p$ in a model where auctioneers compete against each other for bidders. 


\bibliographystyle{splncs03}
\bibliography{qp_wine_2015}

\begin{thebibliography}{10}
\providecommand{\url}[1]{\texttt{#1}}
\providecommand{\urlprefix}{URL }

\bibitem{brouwer12}
Brouwer, L.: †ber abbildung von mannigfaltigkeiten. Math. Ann.  71,  97--115
  (1912)

\bibitem{tullock80}
Buchanan, J., Tollison, R., Tullock, G. (eds.): Toward a theory of the
  rent-seeking society. No.~4 in Texas A \& M University economics series,
  Texas A \& M Univ. Press, College Station, Tex., 1. ed edn. (1980),
  \url{http://gso.gbv.de/DB=2.1/CMD?ACT=SRCHA&SRT=YOP&IKT=1016&TRM=ppn+02063613X&sourceid=fbw_bibsonomy}

\bibitem{cole14}
Cole, R., Roughgarden, T.: The sample complexity of revenue maximization. In:
  Proceedings of the 46th Annual ACM Symposium on Theory of Computing. pp.
  243--252. STOC '14, ACM, New York, NY, USA (2014),
  \url{http://doi.acm.org/10.1145/2591796.2591867}

\bibitem{dughmi14}
Dughmi, S., Han, L., Nisan, N.: Sampling and representation complexity of
  revenue maximization. CoRR  abs/1402.4535 (2014),
  \url{http://arxiv.org/abs/1402.4535}

\bibitem{franklin02}
Franklin, J.N.: Methods of mathematical economics : linear and nonlinear
  programming, fixed-point theorems. Classics in applied mathematics, SIAM,
  Philadelphia (2002), \url{http://opac.inria.fr/record=b1105716}

\bibitem{goldberg02}
Goldberg, A.V., Hartline, J.D., Karlin, A.R., Wright, A., , Saks, M.:
  Competitive auctions. Games and Economic Behavior pp. 72--81 (2002)

\bibitem{hartline07}
Hartline, J., Karlin, A.: Profit maximization in mechanism design. In: Nisan,
  N., Roughgarden, T., Tardos, E., Vazirani, V.V. (eds.) Algorithmic Game
  Theory, pp. 331--362. Cambridge University Press (2007)

\bibitem{hummel14}
Hummel, P., McAfee, P.: Machine learning in an auction environment. In:
  Proceedings of the 23rd International Conference on the World Wide Web (WWW).
  pp. 7--18 (2014), \url{http://dl.acm.org/citation.cfm?id=2567974}

\bibitem{kelly97}
Kelly, F.: Charging and rate control for elastic traffic. European Transactions
  on Telecommunications  (1997)

\bibitem{li10}
Li, S.M., Mahdian, M., McAfee, R.: Value of learning in sponsored search
  auctions. In: Saberi, A. (ed.) Internet and Network Economics, Lecture Notes
  in Computer Science, vol. 6484, pp. 294--305. Springer Berlin Heidelberg
  (2010), \url{http://dx.doi.org/10.1007/978-3-642-17572-5_24}

\bibitem{lu06}
Lu, P., Teng, S.H., Yu, C.: Truthful auctions with optimal profit. In:
  Spirakis, P.G., Mavronicolas, M., Kontogiannis, S.C. (eds.) WINE. Lecture
  Notes in Computer Science, vol. 4286, pp. 27--36. Springer (2006),
  \url{http://dblp.uni-trier.de/db/conf/wine/wine2006.html#LuTY06}

\bibitem{mirrokni10}
Mirrokni, V.S., Muthukrishnan, S., Nadav, U.: Quasi-proportional mechanisms:
  Prior-free revenue maximization. In: LATIN 2010: Proceedings of 9th Latin
  American Theoretical Informatics Symposium (2010),
  \url{http://arxiv.org/abs/0909.5365}

\bibitem{muthukrishnan09}
Muthukrishnan, S.: Ad exchanges: Research issues. In: Internet and Network
  Economics: 5th International Workshop, WINE 2009 (2009)

\bibitem{myerson81}
Myerson, R.B.: Optimal auction design. Mathematics of Operations Research
  6(1),  58--73 (1981)

\bibitem{nguyen10}
Nguyen, T., Vojnovic, M.: Prior-free auctions without reserve prices. Tech.
  Rep. MSR-TR-2010-91, Cornell University (July 2010),
  \url{http://research.microsoft.com/apps/pubs/default.aspx?id=135075}

\bibitem{riley81}
Riley, J.G., Samuelson, W.F.: Optimal auctions. American Economic Review
  71(3),  381--392 (1981)

\bibitem{rosen65}
Rosen, J.: Existence and uniqueness of equilibrium points for concave n-person
  games. Econometrica pp. 520--534 (1965)

\end{thebibliography}

%


\newpage
\appendix
\begin{center}
\textbf{\Large Appendices}
\end{center}
\section{Proofs}
\subsection{Proof for Theorem \ref{thm_best_response}}

We will first prove a lemma about allocations and their derivatives, apply it to prove a lemma about weight functions, then apply that lemma to our weight functions. 

\begin{lemma} \label{lem1}
If $a a''  < 2 (a')^2$, then $(u' = 0) \Rightarrow (u''<0)$.
\end{lemma}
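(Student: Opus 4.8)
The plan is to express $u'$ and $u''$ directly in terms of $a$, $a'$, $a''$, $v$, and $b$, and then substitute the critical-point condition $u'=0$. Differentiating $u=a(v-b)$ gives $u' = a'(v-b) - a$ and $u'' = a''(v-b) - 2a'$. So the claim to establish is: at any $b$ where $u'=0$ and $a a'' < 2(a')^2$, we have $u''<0$.

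First I would dispose of the degenerate case $a'=0$. There $u' = a'(v-b) - a = -a$, so $u'=0$ forces $a=0$; but then $a a'' = 0 = 2(a')^2$, which contradicts the hypothesis $a a'' < 2(a')^2$. Hence there is no point with $a'=0$, $u'=0$, and $a a'' < 2(a')^2$ simultaneously, and the implication holds vacuously in this case.

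Otherwise $a'\neq 0$. Since $a = f/(f+s)$ is non-decreasing in the bidder's own bid — the weight function is increasing and $\mathrm{d}a/\mathrm{d}f = s/(f+s)^2 > 0$ — we have $a'\ge 0$, hence $a'>0$. The condition $u'=0$ now reads $v-b = a/a'$, and substituting into $u'' = a''(v-b) - 2a'$ yields
\[
u'' = \frac{a}{a'}\,a'' - 2a' = \frac{a a'' - 2(a')^2}{a'}.
\]
The numerator is negative by hypothesis and the denominator is positive, so $u''<0$, which is what we wanted.

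I expect the only subtlety to be the role of monotonicity: the inequality $a a'' < 2(a')^2$ by itself does not force $u''<0$ unless we also know $a'>0$ (a negative $a'$ would flip the sign of $u''$), so the substantive point is to note that own-bid allocations are non-decreasing, paired with the quick observation that $a'=0$ cannot coincide with a critical point of $u$ under the hypothesis. The remainder is a one-line substitution.
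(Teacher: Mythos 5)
Your proof is correct and follows essentially the same route as the paper's: differentiate $u=a(v-b)$, rule out $a'=0$ at a critical point via the contradiction $aa''=0=2(a')^2$, substitute $v-b=a/a'$ into $u''$, and use the sign of $a'$. Your explicit justification that $a'>0$ (from monotonicity of $a$ in the own bid) is a small but welcome bit of extra care that the paper simply asserts.
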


\begin{proof}[of Lemma \ref{lem1}]
Since
\be
u' = a'(v-b) - a,
\ee
$u'=0$ implies $a' \neq 0$ (otherwise $a=0$ also holds and $a a''  = 2 (a')^2$) and 
\be
v-b = \frac{a}{a'}. \label{u1}
\ee
Since 
\be
u'' = a''(v-b) - 2 a',
\ee
combining with Equation \ref{u1}, we have
\be
u'' = \frac{a a'' }{a'}  - 2 a' = \frac{1}{a'} \left[a a'' -  2 (a')^2 \right].
\ee
Since $a'>0$ and $a a''  < 2 (a')^2$, we have $u''<0$.
%
\end{proof}

\begin{lemma} \label{lem2}
If $f f'' < 2 (f')^2$, then $(u' = 0) \Rightarrow (u''<0)$.
\end{lemma}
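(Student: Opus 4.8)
\textbf{Proof proposal for Lemma~\ref{lem2}.}
The plan is to reduce Lemma~\ref{lem2} to Lemma~\ref{lem1}: it suffices to show that the hypothesis $f f'' < 2(f')^2$ on the weight function forces the corresponding inequality $a a'' < 2(a')^2$ on the allocation $a = \frac{f}{f+s}$, where $s>0$ is the (constant) sum of the other bidders' weights. First I would differentiate $a = \frac{f}{f+s}$ twice. The quotient rule gives
\[
a' = \frac{s f'}{(f+s)^2}, \qquad a'' = \frac{s\bigl[(f+s) f'' - 2 (f')^2\bigr]}{(f+s)^3}.
\]

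Next I would combine these to evaluate $a a'' - 2 (a')^2$. Substituting and clearing the common denominator $(f+s)^4$, the numerator becomes $s\bigl[f(f+s) f'' - 2 f (f')^2 - 2 s (f')^2\bigr] = s(f+s)\bigl[f f'' - 2 (f')^2\bigr]$, so that
\[
a a'' - 2 (a')^2 = \frac{s}{(f+s)^3}\bigl[f f'' - 2 (f')^2\bigr].
\]
This identity is the crux of the argument. Since $s > 0$ and $f + s > 0$, the sign of $a a'' - 2 (a')^2$ equals the sign of $f f'' - 2(f')^2$; in particular the hypothesis $f f'' < 2 (f')^2$ implies $a a'' < 2(a')^2$. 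Applying Lemma~\ref{lem1} then yields $(u' = 0) \Rightarrow (u'' < 0)$, as claimed.

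There is essentially no obstacle here beyond the bookkeeping of the two differentiations and keeping careful track of signs: the only facts used are the positivity of $s$ and of $f+s$, both immediate from $s > 0$ and $f = b^p \ge 0$. The one point worth isolating is the factorization $f(f+s) f'' - 2(f')^2(f+s) = (f+s)\bigl(f f'' - 2(f')^2\bigr)$, which collapses the quotient-rule expression into the single-factor form above; once that is seen, the implication is immediate and the lemma follows directly from Lemma~\ref{lem1}.
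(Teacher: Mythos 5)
Your proof is correct and follows essentially the same route as the paper: compute $a'$ and $a''$ by the quotient rule, and observe that $a a'' - 2(a')^2$ equals $\frac{s}{(f+s)^3}\bigl[f f'' - 2(f')^2\bigr]$, so the two inequalities are equivalent given $s>0$ and $f+s>0$. The only cosmetic difference is that the paper substitutes into the inequality and cancels the common factor $(f+s)$, whereas you factor the difference directly; the algebra is identical.
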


\begin{proof}[of Lemma \ref{lem2}]
Note that
\be
a = \frac{f}{f+s},
\ee
\be
a' = \frac{f' s}{(f+s)^2},
\ee
and
\be
a'' = \frac{s[f'' (f+s) - 2 (f')^2]}{(f+s)^3}.
\ee
Substitute into $a a''  < 2 (a')^2$ from Lemma \ref{lem1} and multiply by $s (f+s)^4$:
\be
f^2 f'' + f f'' s - 2 f (f')^2 < 2 (f')^2 s.
\ee
\be
f^2 f'' + f f'' s < 2 f (f')^2 + 2 (f')^2 s.
\ee
\be
f f'' (f+s) < 2 (f')^2 (f+s).
\ee
\be
f f'' < 2 (f')^2.
\ee
\end{proof}

\begin{proof}[of Theorem \ref{thm_best_response}]
Note that 
\be
f = b^p, f'=pb^{p-1},  \hbox{and } f''=p(p-1)b^{p-2}.
\ee
Substitute into $f f'' < 2 (f')^2$ from Lemma \ref{lem2}:
\be
b^p p (p-1) b^{p-2} < 2 (p b^{p-1})^2.
\ee
\be
p (p-1) b^{2p-2} < 2 p^2 b^{2p-2}.
\ee
\be
p (p-1) < 2 p^2.
\ee
This holds for $p>0$.
\end{proof}

\subsection{Proof for Theorem \ref{thm_bounded_mapping}}

\begin{proof}
Consider any bidder $i$. To simplify the exposition, we drop the subscripts $i$ while we focus on that single bidder. Let $u$ be their utility function and $w$ be their lower bound $w_i$. Let $s$ be the sum of the weight function over other bidders' bids:
\be
s = \sum_{j \not= i} f(b_j),
\ee
and assume bids are at least their lower bounds: $\forall j \not= i: b_j \geq w_j$. Recall from Theorem \ref{thm_best_response} that the utility function $u$ has a single local maximum, so $u' \geq 0$ at $b=w$ implies that the best response is at least $w$. At $b=w$, 
\be
u' = a'(v-w) - a.
\ee
So 
\be
v - w \geq a / a'
\ee
implies $u'\geq 0$ at $b=w$. Substitute
\be
a = \frac{f}{f+s} \hbox{ and } a'=\frac{f's}{(f+s)^2}:
\ee
\be
v - w \geq \frac{f (f+s)^2}{(f+s) f' s}.
\ee
Cancel $f+s$ and do some algebra:
\be
w \leq v - \frac{f}{f'}(1+ \frac{f}{s}).
\ee
Substitute $f = w^p$ and $f' = p w^{p-1}$:
\be
w \leq v - \frac{w}{p}(1+ \frac{f}{s}).
\ee
Solve for w:
\be
w \leq \frac{v}{1 + \frac{1}{p} (1+ \frac{f}{s})}. \label{ww}
\ee
\end{proof}

\subsection{Proof for Lemma \ref{lemma_1}}

\begin{proof}
Similarly as the previous section, we define
$s_i^*=\sum_{j \neq i} f(b_j^*)$. From the first-order condition at a pure-strategy Nash equilibrium, we have\footnote{Notice that the results in the previous section indicate the denominator is bounded away from $0$.}
\begin{equation}
\frac{f(b_i^*) \left[f(b_i^*) +s_i^* \right]}{f'(b_i^*) s_i^* \left[ v_i -b_i^* \right]}=1 \quad \forall i=1,2, \ldots, n.  \label{eqn:first_order}
\end{equation}
Notice that $f(b_i^*) +s_i^*$ is a constant for all $i$.
Thus, for $f(x)=x^p$ and any $i,j \geq 2$, we have
$
\frac{b_i^*}{s_i^* \left[ 1- b_i^*\right]} = \frac{b_j^*}{s_j^* \left[ 1- b_j^*\right]}$ . 
Let $m=\sum_{k \neq i,j} f(b_k^*)$, we have 
$s_i^*=m +(b_j^*)^p$ and $s_j^*=m +(b_i^*)^p$, hence we have
\[
\frac{b_i^* \left[ m +(b_i^*)^p \right]}{ \left[ 1- b_i^*\right]} = \frac{b_j^*  \left[ m +(b_j^*)^p \right]}{ \left[ 1- b_j^*\right]}.
\]
Notice that $0< b_i^*, b_j^*<1$, thus, to prove $b_i^*=b_j^*$, it is sufficient to prove that function
$g(x)=\frac{x \left[ m +x^p\right]}{1-x}$ is strictly monotone in interval $(0,1)$, for any $m$ and $p$, notice that
\[
g'(x)=\frac{1}{(1-x)^2} \left[ m +x^p + p x^p (1-x) \right]>0.
\]
Hence $b_i^*=b_j^*$ and we have proved the lemma.
\end{proof}

\subsection{Proof for Lemma \ref{lemma:case_n_2}}

\begin{proof}
Notice that when $z \geq \alpha$. we have
\[
(1+p)(n-1) z^{p+1} \geq \alpha (1+p)(n-1) z^p
\]
and
\[
 (n-1)(n-2)(1+p)z \geq (1+p)(n-2)(n-1)\alpha.
\]
Thus, $h(z) \geq z^{2p+1} + (n-2) z^{p+1} - \alpha (n-1)$. Since $z \geq \alpha >1$ and $p>0$, we have $z^p>1$ and
$z^{2p+1}  > z^{p+1} > z \geq \alpha$. Thus, $h(z)>0$.

Similarly, when $z \leq \alpha^{\frac{1}{1+2p}}$, we have
\[
(1+p)(n-1) z^{p+1} < \alpha (1+p)(n-1) z^p
\]
and
\[
 (n-1)(n-2)(1+p)z \leq (1+p)(n-2)(n-1)\alpha.
\]
Thus, $h(z) < z^{2p+1} + (n-2) z^{p+1} - \alpha (n-1)$. Since $z \leq \alpha^{\frac{1}{1+2p}}$ and $p>0$, we have
$z^{1+2p} \leq \alpha$ and $z^{1+p} \leq \alpha^{\frac{1+p}{1+2p}}<\alpha$. Thus, $h(z)<0$.
\end{proof}

\subsection{Proof for Lemma \ref{lemma:case_n_3}}

\begin{proof}
Notice that from Lemma \ref{lemma:case_n_2}, the equation $h(z)=0$ has no solution in interval $\left( 0,  \alpha^{\frac{1}{2p+1}} \right]$ and interval
$\left[\alpha, \infty \right)$. Moreover, since function $h(z)$ is continuous, $h \left( \alpha^{\frac{1}{1+2p}}\right)<0$, and $h(\alpha)>0$, 
equation $h(z)=0$ has at least one solution in the interval $\left ( \alpha^{\frac{1}{2p+1}}, \alpha \right )$.

Note that to prove the uniqueness of the solution in the interval $\left ( \alpha^{\frac{1}{2p+1}}, \alpha \right )$, it is sufficient to prove that
$h'(z)>0$ when $h(z) \geq 0$. That is, to prove that $h(z)$ is \emph{strictly increasing} when it is \emph{nonnegative}.
To simplify the exposition, we write $h(z)$ as
\[
h(z)=z^{2p+1} + c_1 z^{p+1} - c_2 z^p + c_3 z - c_4,
\]
where $c_1 = (n-2) + (1+p)(n-1)$, $c_2= \alpha (1+p)(n-1)$, $c_3= (n-1)(n-2)(1+p)$, and
$c_4=\alpha (n-1) \left[ (1+p)(n-2)+1\right]$ (see Equation~\ref{eqn:h_equation}).
Notice that $c_1, c_2, c_4>0$, and $c_3 \geq 0$ (note $c_3=0$ when $n=2$). Hence we have
\[
h'(z) = (2p+1) z^{2p} + (p+1) c_1 z^p - p c_2 z^{p-1} +c_3.
\]

The first observation is that in interval $\left ( \alpha^{\frac{1}{2p+1}}, \alpha \right )$, we always have
\[
c_3 z -c_4 < (n-1)(n-2)(1+p) \alpha - \alpha (n-1) \left[ (1+p)(n-2)+1\right] = -\alpha (n-1) <0.
\]
Thus, $h(z) \geq 0$ implies that
\[
z^{2p+1} + c_1 z^{p+1} - c_2 z^p > \alpha (n-1) >0, 
\]
which leads to
\[
pz^{2p} + p c_1 z^{p} -p c_2 z^{p-1} > \alpha p (n-1)/ z > p(n-1) >0,
\]
where the second inequality follows from $z < \alpha$. Thus we have
\[
h'(z) = (2p+1) z^{2p} + (p+1) c_1 z^p - p c_2 z^{p-1} +c_3 >pz^{2p} + p c_1 z^{p} -p c_2 z^{p-1} +c_3 > p(n-1)+c_3 >0.
\]
Thus, in interval $\left ( \alpha^{\frac{1}{2p+1}}, \alpha \right )$, $h'(z)>0$ when $h(z) \geq 0$.
\end{proof}

\end{document}